\begin{document}


\setcounter{page}{241}
\publyear{22}
\papernumber{2160}
\volume{189}
\issue{3-4}

 \finalVersionForARXIV


\title{Reachability in Simple Neural Networks}

\author{Marco S\"alzer\thanks{Address for correspondence: School of Electr. Eng. and Computer Science,
		                 University of Kassel, Germany},  Martin Lange
 \\
		School of Electr. Eng. and Computer Science\\
		University of Kassel, Germany\\
       \{marco.saelzer, martin.lange\}@uni-kassel.de
}

\maketitle
\runninghead{M. S\"alzer and  M. Lange}{Reach. Is NP-Complete Even for the Simplest NN}

\begin{abstract} 	
We investigate the complexity of the reachability problem for (deep) neural networks: does it compute valid output given some
valid input? It was recently claimed that the problem is NP-complete for general neural networks and specifications over the
input/output dimension given by conjunctions of linear inequalities. We recapitulate the proof and repair some flaws in the original upper
and lower bound proofs. Motivated by the general result, we show that NP-hardness already holds for restricted classes of simple specifications and neural networks.
Allowing for a single hidden layer and an output dimension of one as well as neural networks with just one negative, zero and one positive weight or bias is sufficient to
ensure NP-hardness. Additionally, we give a thorough discussion and outlook of possible extensions for this direction of research on neural network verification.
\end{abstract}

\begin{keywords}
	machine learning, computational complexity, formal specification and verification
\end{keywords}


\section{Introduction}
Deep learning has proved to be very successful for highly challenging or even otherwise intractable tasks in a broad range of applications
such as image recognition \cite{KrizhevskySH17} or natural language processing \cite{X12a} but also safety-critical applications like
autonomous driving \cite{GrigorescuTCM20}, medical applications \cite{LitjensKBSCGLGS17}, or financial matters \cite{DixonKB17}. These
naturally come with safety concerns and the need for certification methods. Recent such methods can be divided into
(\textsc{i}) Adversarial Attack and Defense, (\textsc{ii}) Testing, and (\textsc{iii}) Formal Verification. A comprehensive survery about all three categories is given in \cite{HuangKRSSTWY20}.

The former two cannot guarantee the absence of errors. Formal verification of neural networks (NN) is a relatively new area of research which
ensures completeness of the certification procedure. Recent work on sound and complete verification algorithms for NN
is mostly concerned with efficient solutions to their reachability problem \reach{}
\cite{KatzBDJK17, Ehlers17, NarodytskaKRSW18, BunelTTKM18}:
given an NN and symbolic specifications of valid inputs and outputs, decide whether there is some valid input such that the
corresponding output is valid, too. This corresponds to the understanding of reachability in classical software verification: valid sets of inputs and outputs are specified and the question is whether there is a
valid input that leads to a valid output. Put differently, the question is whether the set of valid outputs is reachable from the set of valid
inputs. The difference to classical reachability problems in discrete state-based programs is that there reachability is a matter of
\emph{lengths} of a connection. In NN this is given by the number of layers, and it is rather the \emph{width} of the continuous
state space which may cause unreachability.

Solving \reach{} is interesting for practical purposes. An efficient algorithm can be used to ensure that no input from some specified
set of inputs is misclassified or that some undesired class of outputs is never reached. In applications like autonomous-driving, where
classifiers based on neural networks are used to make critical decisions, such safeguards are indispensable.

However, all known algorithms for \reach{} show the same drawback: a lack of scalability to networks of large size which, unfortunately,
are typically featured in real-world scenarios. This is not a big surprise as the problem is NP-complete. This
result was proposed by Katz et al.\ \cite{KatzBDJK17} for NN with ReLU and identity activations, and later also by Ruan et al.\
mentioned in \cite{RuanHK18} and done in \cite{RuanHK18_arxiv_version}.
While there is no reason to doubt the NP-completeness claim, the proofs are not stringent and contain flaws.

The argument for the upper bound in \cite{KatzBDJK17} claims inclusion in NP via the standard guess-and-verify approach: guess some
valid inputs, pass them through the NN and check whether the resulting numbers are valid outputs. This is flawed, though: it misses the
fact that the guessed witnesses need to be polynomially bounded in size, i.e.\ in the size of the NN and the input and
output specifications. No argument is given in \cite{KatzBDJK17} for a bound on the representation of such values, let alone a polynomial
one.\footnote{A follow-up paper by Katz et al.\ contains an extended version of the original article \cite{Katz21} without this flaw
being corrected.}
 In fact, guessing values in $\Real$ is not even effective without a bound on the size of their representation, hence this approach
does not even show recursive enumerability of the reachability problem. On the other hand, obtaining a polynomial bound on the values to be
guessed is closely linked to the question whether such values can be approximated up to some precision, for which no argument is given
in \cite{KatzBDJK17} either.

The arguments for the lower bound by a reduction from \threesat{} in \cite{KatzBDJK17} and \cite{RuanHK18_arxiv_version} rely on a
discretisation of real values to model Boolean values. This does not work for the signum function $\sigma$ used by Ruan et al.\
as it is not congruent for sums: e.g.\ $\sigma(-3) = \sigma(-1)$ but $\sigma(2 + (-3)) \ne \sigma(2 + (-1))$, showing that one
cannot simply interpret any negative number as the Boolean value \emph{false} etc. As a consequence, completeness of the
construction fails as there are (real) solutions to \reach{} which do not correspond to (discrete) satisfying \threesat assignments.
Katz et al.\ seem to be aware of this in \cite{KatzBDJK17} and use a slightly more elaborate discretisation in their reduction, but unfortunately it
still suffers from similar problems. These problems are repaired in \cite{Katz21}.

We start our investigations into the complexity of \reach{} by fixing these issues in Section~\ref{sec:np_compl}. We provide a different
argument for membership in NP which shows that the need for nondeterminism is not to be sought in the input values but in the use
of nodes with sophisticated activation functions like ReLU. Somehow surprisingly, we obtain the missing polynomial bound for the witnesses used
by Katz et al.\ as a corollary of our observations. Moreover, we show that NP membership is preserved if we allow the use of arbitrary piecewise linear activation functions.
We also address the issue of discretisation of real values in the lower bound proof, fixing the construction given by Katz et al.\ in \cite{KatzBDJK17}
and presenting an alternative fix compared to \cite{Katz21}.  We do not address the one by Ruan et al.\ from \cite{RuanHK18} any further, as this does not
provide further insights or new results.

We then observe that the reduction from \threesat constructs a very specific class of neural networks. NN from this class have a fixed number of
layers but scaling input and output dimension as well as layer size. This raises the question whether, in comparison to these networks,
reducing the number of layers or fixing dimensionality leads to a class of networks for which \reach{} is
efficiently solvable. In Section~\ref{sec:low_compl_fragments} we show that the answer to this is mostly negative: NP-hardness of \reach{}
holds for NN with just one layer and an output dimension of one. While this provides minimal requirements on the structure of NN for
\reach{} to be NP-hard, we also give minimal criteria on the weights and biases in NN for NP-hardness to hold. Thus, the computational
difficulty of \reach{} in the sense of NP-completeness is quite robust. The requirements on the structure or parameters of an NN
that are needed for NP-hardness to occur are easily met in practical applications.

We conclude in Section~\ref{sec:conclusion} with references to possible future work which focuses on further research regarding
computational complexity and decidability issues on the reachability problems of NN or deep learning models in a broader sense.

This article is a revised and expanded version of \cite{SalzerL21}, with the following noteworthy differences:
proofs and explanations, especially in Section~\ref{sec:low_compl_fragments}, are enriched with helpful explanations and descriptions,
the NP membership of \reach is shown for neural networks with arbitrary, piecewise linear activation functions (see Section~\ref{sec:np_compl;subsec:member}),
we provide a direct proof for the polynomial bound missing in the arguments of \cite{KatzBDJK17} (see Corollary~\ref{cor:polybound}),
it is shown that  NP-hardness holds for neural networks using ReLU activations only (see Corollary~\ref{cor:reluonly}) and
a thorough outlook on possible further steps is provided (see Section~\ref{sec:conclusion}).


\section{Preliminaries}
\label{sec:prelim}

A \emph{neural network} (NN) $N$ can be seen as a layered graph, representing a function of type $\Real^n \rightarrow \Real^m$.
The first layer $l=0$ is called the input layer and consists of $n$ nodes. The $i$-th node computes the output $y_{0i} = x_i$ where $x_i$ is the $i$-th input to the overall network. Thus, the output of the input layer  $(y_{00}, \dotsc, y_{0(n-1)})$ is identical to the input of $N$.
A  layer $1 \leq l \leq L-2$ is called \emph{hidden} and consists of $k$ nodes. Note that $k$ must not be uniform across the hidden layers of $N$. Then, the $i$-th node of layer $l$ computes the output $y_{li} = \sigma_{li}(\sum_j c^{(l-1)}_{ji}y_{(l-1)j} + b_{li})$ where $j$ iterates
over the output dimensions of the previous layer, $c^{(l-1)}_{ji}$ are rational constants which are called \emph{weights}, $b_{li}$ is a rational constant which is called \emph{bias} and $\sigma_{li}$ is some (typically non-linear) function called \emph{activation}.
The outputs of all nodes of layer $l$ combined gives the output $(y_{l0}, \dotsc, y_{l(k-1)})$ of the hidden layer.
The last layer $l = L-1$ is called the \emph{output layer} and consists of $m$ nodes. The $i$-th node computes an output $y_{(L-1)i}$ in the same way as a node in a hidden layer. The output of the output layer $(y_{(L-1)0}, \dotsc, y_{(L-1)(m-1)})$ is considered as the output of the network $N$.
We denote the output of $N$ given $\boldsymbol{x}$ by $N(\boldsymbol{x})$.%

We consider neural networks using piecewise linear activations in this work.
A function $f: \Real \rightarrow \Real$ is called \emph{piecewise linear (PWL)} if there are $a_0,b_0 \in \Rat$ such that $f(x) = a_0x + b_0$ or there are $m \in \Nat, m \geq 2$ linear functions $a_i x + b_i$ with $a_i, b_i \in \Rat$ and $m-1$ breakpoints $t_j \in \Real$
such that
\begin{align*}
	f(x) = \begin{cases}	a_0x + b_0 & \text{if } x < t_0, \\
										 a_i x + b_i & \text{if } t_i \leq x < t_{i+1}  \\
										 a_{m-1} x + b_{m-1} & \text{if } t_{m-1} \leq x.
	  \end{cases}
\end{align*}
We denote the $i$-th linear part of $f$ by $f^i$. The \emph{ReLU function}, commonly used as an activation in all hidden layers, is defined as $x \mapsto \max(0,x)$ or $x \mapsto 0$ if $x < 0$ and $x \mapsto x$ otherwise.
The second definition makes clear that ReLU is a PWL function. Obviously, the same holds for the identity function.
In order to make our results about lower bounds as strong as possible, we focus on neural networks using ReLU (and identity as a shortcut) activations only. Nodes with ReLU (identity) activation are called ReLU (identity) nodes.
Given some input to the NN, we say that a ReLU node is \emph{active}, resp.\ \emph{inactive} if the input for its activation function is greater or equal, resp.\ less than  $0$.

\medskip
We make use of two ways to represent NN in this work. The first one is visualizing an NN as a directed graph with weighted edges. An example is given in Figure~\ref{sec:prelim;fig:nn_visualized}.
The second one is referring to a NN $N$ by its computed function $f_N$. Obviously, this second way is ambiguous as there are infinitely many NN that compute $f_N$. To keep things easy,
we assume that it is clear in the respective situation which NN is referred to.

\begin{figure}[ht!]
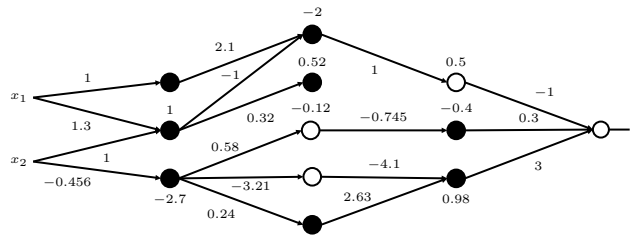

\vspace*{-1mm}
	\centering
	\include{graphics/prelim_network_schema}\vspace*{-5mm}
	\caption{Schema of a neural network with five layers, input dimension of two and output dimension of one. Filled nodes are
       ReLU nodes, empty nodes are identity nodes.  An edge between two nodes $u$ and $v$ with label $w$ denotes that the
       output of $u$ is weighted with $w$ in the computation of $v$. No edge between $u$ and $v$ implies $w=0$.
		The bias of a node is depicted by a value above or below the node. If there is no such value then the bias is zero.}
	\label{sec:prelim;fig:nn_visualized}\vspace*{-2mm}
\end{figure}

\medskip
Our main interest lies in the validity of specifications over the output values of NN given specifications over their input values.
A \emph{specification} $\varphi$ for a given set of variables $X$ is defined by the following grammar:
\begin{align*}
	\varphi &::= \varphi \land \varphi \mid t \leq b \qquad &
	t &::= c \cdot x \mid t + t
\end{align*}
where $b,c$ are rational constants and $x \in X$ is a variable.
We use $t \geq b$ and $t = b$ as syntactic sugar for $-t \leq -b$ and $t \leq b \land -t \leq -b$. Furthermore, we use $\top$ for $x + (-x) = 0$ and $\bot$ for $x + (-x) = 1$ where $x$ is some variable.
We call a specification $\varphi$ \textit{simple} if for all $t \leq b$ it holds that $t = c\cdot x$ for some rational constant $c$ and variable $x$.
A specification $\varphi(x_0, \dotsc, x_{n-1})$ is \emph{true} under $\boldsymbol{x} = (r_0, \dotsc, r_{n-1}) \in \Real^n$
if each inequality in $\varphi$ is satisfied in real arithmetic with each $x_i$ set to $r_i$.
We write $\varphi(\boldsymbol{x})$ for the application of $\boldsymbol{x}$ to the variables of $\varphi$. 
If there are less variables in $\varphi$ than dimensions in $\boldsymbol{x}$ we ignore the additional values of $\boldsymbol{x}$. If we
consider a specification $\varphi$ in context of a neural network $N$ we call it an \emph{input}, respectively \emph{output specification} and assume that
the set of variables occurring in $\varphi$ is a subset of the input respectively output variables of $N$.

\medskip
The decision problem \reach{} is defined as follows: given a neural network $N$ with PWL activations, input specification $\varphi_{\text{in}}(x_0, \dotsc, x_{n-1})$ and output specification
$\varphi_{\text{out}}(y_0, \dotsc, y_{m-1})$, is there $\boldsymbol{x} \in \Real^n$ such that $\varphi_{\text{in}}(\boldsymbol{x})$ and $\varphi_{\text{out}}(N(\boldsymbol{x}))$ are true?


\section{\reach is NP-complete}
\label{sec:np_compl}

\subsection{Membership in NP}
\label{sec:np_compl;subsec:member}
The argument used by Katz et al.\ to show membership of \reach{} in NP can be summarized as follows: nondeterministically guess an input
vector $\boldsymbol{x}$ as a witness, compute the output $N(\boldsymbol{x})$ of the network and check that
$\varphi_{\text{in}}(\boldsymbol{x}) \land \varphi_{\text{out}}(N(\boldsymbol{x}))$ holds. It is indisputable that the computation and
check of this procedure are polynomial in the size of $N$, $\varphi_{\text{in}}$, $\varphi_{\text{out}}$ \emph{and} the size of
$\boldsymbol{x}$. However, for inclusion in NP we also need the size of $\boldsymbol{x}$ to be polynomially bounded in the size of
the instance given as $(N,\varphi_{\text{in}},\varphi_{\text{out}})$, which is not provided in \cite{KatzBDJK17}.
We first present an alternative proof for $\reach \in \text{NP}$, circumventing the missing argument of a polynomial bound for the size of satisfying inputs,
by establishing a connection between NN with PWL activations and an extended notion of linear programs. Note that this subsumes the claim of Katz et al.,
as ReLU and identity are PWL activations. Furthermore, this proof directly implies the polynomial bound in question, which we will see at the end of this
section.

\begin{definition}
A \emph{PWL-linear program} over a set $X = \{x_0,\ldots,x_{n-1}\}$ of variables is a set $\Phi$ of (in-)equalities of the form
$b_j + \textstyle\sum_{i=1}^m c_{ji} \cdot x_{ji} \leq x_j$ and $f(b_j + \textstyle\sum_{i=1}^m c_{ji} \cdot x_{ji}) = x_j$
where $x_{ji}, x_j \in X$, $c_{ji},b_j \in \Rat$ and $f$ is a PWL function. We call the second kind a \emph{PWL-equality}.
A \emph{solution} to $\Phi$ is a vector $\boldsymbol{x} \in \Real^n$ which satisfies all (in-)equalities when each variable $x_i \in X$ is replaced by $\boldsymbol{x}(i)$.
A PWL-equality $f(b_j + \sum_{i=1}^m c_{ji} \cdot x_{ji}) = x_j$, where $f$ has $m$ linear parts $a_l x + b_l$ and $m-1$ breakpoints $t_l$,
is satisfied by $\boldsymbol{x}$ with  $t_l \leq b_j + \sum_{i=1}^m c_{ji} \cdot x_{ji} < t_{l+1}$ if  $a_l(b_j + \sum_{i=1}^m c_{ji} \cdot x_{ji}) + b_l = x_j$.
The problem of \emph{solving} a PWL-linear program is: given $\Phi$, decide whether there is a solution to it.
\end{definition}

It is well-known that regular linear programming can be done in polynomial time \cite{Karmarkar84}. We use this to show that solving
PWL-linear programs can be done in nondeterministic polynomial time.

\begin{lemma}
\label{thm:relulinprognp}
The problem of solving a PWL-linear program is in NP.
\end{lemma}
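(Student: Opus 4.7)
The plan is to reduce solving a PWL-linear program to a polynomial number of nondeterministic guesses, each resolvable by a standard linear-programming feasibility check. The key observation is that every PWL-equality $f(b_j + \sum_i c_{ji} x_{ji}) = x_j$ becomes purely linear once we commit to which of the $m$ linear parts of $f$ is applicable to the argument. The nondeterministic certificate will therefore consist of one index $l \in \{0,\ldots,m-1\}$ per PWL-equality of $\Phi$, picking the active piece. Since there are only as many PWL-equalities as occur in $\Phi$, each index has size $O(\log m)$, and $m$ is bounded by the description of $f$, the whole certificate has size polynomial in $|\Phi|$.

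Given such a guess $(l_1,\ldots,l_k)$, I would build an auxiliary system $\Phi'$ by leaving the purely linear inequalities of $\Phi$ untouched and replacing each PWL-equality by the linear equality $a_{l_j}(b_j + \sum_i c_{ji} x_{ji}) + b_{l_j} = x_j$, which forces $x_j$ to agree with $f^{l_j}$ on the argument, together with interval constraints $t_{l_j} \leq b_j + \sum_i c_{ji} x_{ji} < t_{l_j+1}$ (omitting the lower or upper bound when $l_j$ is the first or last piece), pinning the argument into the half-open region assigned to $f^{l_j}$. Since all $a_{l_j}, b_{l_j}, t_{l_j}$ are rational, $\Phi'$ is a finite system of linear (in)equalities over $\Rat$, some strict and some non-strict. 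Feasibility of such mixed systems is polynomial-time decidable via Karmarkar's algorithm \cite{Karmarkar84} by a standard slack trick: for every strict inequality $t < u$ introduce a fresh slack $s \geq 0$, replace it by $t + s \leq u$, and maximise the sum of all slacks; the strict system is feasible iff the optimum is strictly positive.

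Correctness is direct in both directions. Given any solution $\boldsymbol{x}$ of the original $\Phi$, the guess that, for every PWL-equality, picks the unique piece whose half-open interval contains the realised argument yields a feasible $\Phi'$. Conversely, any solution of any feasible $\Phi'$ satisfies $\Phi$, since the interval constraints guarantee that the committed piece really is the one $f$ uses at that argument, and the linear equality then matches $f$'s value by definition. The main subtlety I expect to have to address explicitly is precisely this half-open-interval treatment: naively relaxing $<$ to $\leq$ would produce false positives when a PWL function is discontinuous at a breakpoint and its two adjoining pieces disagree there, which is why the strict-inequality handling via slacks is genuinely required rather than merely convenient.
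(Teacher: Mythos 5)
Your overall strategy is exactly the paper's: nondeterministically fix, for each PWL-equality, the index of the active linear piece (a certificate of polynomial size), replace the equality by the corresponding linear equality together with interval constraints pinning the argument into that piece's region, and decide the resulting linear system in polynomial time. You also correctly identify the one genuine subtlety, namely that the upper interval bounds are strict and cannot simply be relaxed to non-strict when $f$ is discontinuous at a breakpoint; the paper handles this by introducing a fresh variable $z_h$ in each constraint $b_j + \sum_i c_{ji}x_{ji} \leq t_{w_h+1} + z_h$ and checking for a solution in which every $z_h$ is strictly negative.

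However, the specific mechanism you propose for deciding the mixed strict/non-strict system is wrong as stated. You introduce one slack $s_i \geq 0$ per strict inequality $t_i < u_i$, replace it by $t_i + s_i \leq u_i$, and claim the strict system is feasible iff the maximum of $\sum_i s_i$ is strictly positive. This fails: the sum can be positive while one individual slack is forced to zero. For instance, with strict constraints $y < 0$ and $x < 1$ and the non-strict constraint $y \geq 0$, the strict system is infeasible, yet the relaxed LP admits $y = 0$, $s_1 = 0$, $x = 0$, $s_2 = 1$, so the optimum of $s_1 + s_2$ is positive and your test reports a false positive. The standard repair is to use a single shared slack $s$, replace every $t_i < u_i$ by $t_i + s \leq u_i$, add $s \leq 1$ to ensure boundedness, and maximise $s$; then feasibility of the strict system is equivalent to the optimum being strictly positive, because a positive $s$ certifies slack in every strict constraint simultaneously. (Equivalently, one may keep per-constraint slacks but must verify that all of them can be made positive at once, which is what the paper's check that \emph{all} $z_i$ are negative amounts to.) With this one correction your argument goes through and coincides with the paper's proof.
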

\begin{proof}
	Suppose a PWL-linear program $\Phi$ with $l$ PWL-equalities is given. First, note that there is a maximum number $k$ of linear parts a PWL function $f$ occurring in $\Phi$ consists of.
	Then, the existence of a solution can be decided as follows. Guess, for each PWL-equality $\chi_h$ of the form $f_h(b_j + \sum_{i=1}^m c_{ji} \cdot x_{ji}) = x_j$ where $f_h$ has $k' \leq k$ parts,
	some $w_h \in \{0, \dotsc, k'-1\}$, leading to the witness $\boldsymbol{w} = (w_0, \ldots, w_{l-1})$. It is not hard to see that its size is polynomially bounded by the size of $\Phi$.
	Next, let the linear program $\Phi_{\boldsymbol{w}}$ result from $\Phi$ by replacing each $\chi_h$ by the (in-)equalities $b_j + \textstyle\sum_{i=1}^m c_{ji} \cdot x_{ji} \ge t_{w_h}$,
	$b_j + \textstyle\sum_{i=1}^m c_{ji} \cdot x_{ji} \leq t_{w_h+1} + z_h$ and
	$a_{w_h}(b_j + \textstyle\sum_{i=1}^m c_{ji} \cdot x_{ji}) + b_{w_h} = x_j$ where $z_h$ is a fresh variable.

\medskip	
	The following is straightforward:
	(\textsc{i}) $\Phi_{\boldsymbol{w}}$ is linear in the size of $\Phi$.
	(\textsc{ii}) An optimal solution for the minimization of $\boldsymbol{z} = (z_1, \dotsc, z_l)$ over the constraints given by $\Phi_{\boldsymbol{w}}$
	where all $z_i$ are strictly negative is a witness for a solution of $\Phi$.
	(\textsc{iii}) If $\Phi$ has a solution, then there is $\boldsymbol{w} \in \{0, \dotsc k-1\}^l $ such that $\Phi_{\boldsymbol{w}}$
	has a solution. This can be created as follows. Let $\boldsymbol{x}$ be a solution to $\Phi$. For each PWL-equality $\chi_h$ as
	above, let $w_h = j$ if the active linear part is the $j$-th. Then $\boldsymbol{x}$ is also a solution for
	$\Phi_{\boldsymbol{w}}$. Thus, PWL-linear programs can be solved in nondeterministic polynomial time by guessing $\boldsymbol{w}$,  constructing
	the linear program $\Phi_{\boldsymbol{w}}$, finding the optimal solution minimizing $\boldsymbol{z}$ and checking if all $z_i$ are negative.
\end{proof}

Using the definition of a PWL-linear program and the corresponding lemma at hand, we are set to prove NP-membership of \reach.
\begin{theorem}
\label{thm:reachinnp}
\reach is in NP.
\end{theorem}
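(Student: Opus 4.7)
The plan is to reduce \reach{} in polynomial time to the problem of solving a PWL-linear program and then invoke Lemma~\ref{thm:relulinprognp}. Given an instance $(N,\varphi_{\text{in}},\varphi_{\text{out}})$, I would build a PWL-linear program $\Phi$ whose variables are exactly the node outputs $y_{li}$ of $N$ (one fresh variable per node in every layer).

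First, for every non-input node $(l,i)$ with activation $\sigma_{li}$ I add the PWL-equality
\[
\sigma_{li}\!\Bigl(b_{li} + \textstyle\sum_j c^{(l-1)}_{ji}\cdot y_{(l-1)j}\Bigr) = y_{li},
\]
which directly mirrors the semantics of that node. Since $N$ has polynomial size and every $\sigma_{li}$ is PWL, this contributes only polynomially many equalities. Next, I translate each atomic constraint $\sum_i c_i x_i \leq b$ occurring in $\varphi_{\text{in}}$, respectively $\varphi_{\text{out}}$, into an inequality over the input variables $y_{0i}$, respectively the output variables $y_{(L-1)i}$. If the shape does not immediately match the required form $b_j + \sum c_{ji} x_{ji} \leq x_j$, I would coerce it into that shape by introducing a fresh variable $s$ pinned to $0$ through a trivial PWL-equality (e.g.\ the identity applied to $0$) and writing $\sum c_i x_i - b \leq s$.

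For correctness, I would argue that any reachability witness $\boldsymbol{x}$ extends to a solution of $\Phi$ by propagating $\boldsymbol{x}$ through $N$ and assigning each $y_{li}$ the value computed at the corresponding node. Conversely, any solution of $\Phi$ assigns to the input variables $y_{0i}$ values satisfying $\varphi_{\text{in}}$, and the PWL-equalities force every subsequent $y_{li}$ to equal the unique value produced by $N$, so $\varphi_{\text{out}}(N(\boldsymbol{x}))$ holds as well. Combining this polynomial-time reduction with Lemma~\ref{thm:relulinprognp} yields $\reach\in\mathrm{NP}$.

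The main obstacle I expect is purely bureaucratic: matching the syntactic shape of the specification inequalities and of the node computations with the rigid format prescribed by Definition~1. Conceptually there is nothing deep here, as the network is essentially already a PWL-linear program up to this reshaping. A welcome by-product is that the polynomial size bound on satisfying inputs $\boldsymbol{x}$ that was missing from the NP-membership argument of \cite{KatzBDJK17} drops out for free: once the linear-part witness $\boldsymbol{w}$ of Lemma~\ref{thm:relulinprognp} is fixed, one only solves a standard rational linear program, whose solutions admit the usual polynomial-size representation.
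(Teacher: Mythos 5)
Your proposal is correct and follows essentially the same route as the paper: encode each node of $N$ as a PWL-equality over a fresh variable per node, adjoin the conjuncts of $\varphi_{\text{in}}$ and $\varphi_{\text{out}}$ as (in-)equalities, argue that reachability witnesses and program solutions correspond via propagation/projection, and conclude by Lemma~\ref{thm:relulinprognp}. Your extra care about coercing the specification inequalities into the exact syntactic shape of Definition~1 (via a variable pinned to $0$) is a harmless bookkeeping detail the paper glosses over, and your closing remark about recovering the polynomial size bound matches the paper's Corollary~\ref{cor:polybound}.
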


\begin{proof}
Let $\mathcal{I} = (N,\varphi_{\text{in}},\varphi_{\text{out}})$. We construct a PWL-linear program $\Phi_{\mathcal{I}}$ of size
linear in $|N|+|\varphi_{\text{in}}|+|\varphi_{\text{out}}|$ solvable if and only if there is a solution for
$\mathcal{I}$. The PWL-linear program $\Phi_{\mathcal{I}}$ contains the following (in-)equalities:
\begin{itemize}
\itemsep=0.9pt
\item $\varphi_{\text{in}}$ and $\varphi_{\text{out}}$ (with each conjunct seen as one (in-)equality) and
\item for each node $v_{li}$ computing the function $f(\sum_j c^{(l-1)}_{ji}y_{(l-1)j} + b_{li})$ it contains the PWL-equality
      $f(\sum_j c^{(l-1)}_{ji}y_{(l-1)j} + b_{li}) = y_{li}$.
\end{itemize}
The claim on the size of $\Phi_{\mathcal{I}}$ should be clear. Moreover, note that a solution $\boldsymbol{x}$ to $\mathcal{I}$ can be extended to an assignment $\boldsymbol{x}'$ of real values at every node of $N$, including values
$\boldsymbol{y}$ for the output nodes of $N$ s.t., in particular $N(\boldsymbol{x}) = \boldsymbol{y}$. Then $\boldsymbol{x}'$ is a solution
to $\Phi_{\mathcal{I}}$. Likewise, a solution to $\Phi_{\mathcal{I}}$ can be turned into a solution to $\mathcal{I}$ by projection
to the input variables.
Hence, \reach polynomially reduces to the problem of solving PWL-linear programs which, by Lemma~\ref{thm:relulinprognp} is in NP.
\end{proof}

Taking a look at our translation from \reach{} to PWL-linear programs, one could think that Theorem~\ref{thm:reachinnp} can also be proven by
translating \reach{} instances into \textit{mixed-integer linear programs} (MILP) \cite{AkitundeLMP18}.
Roughly speaking, an MILP is a LP with additional integer variables.
Deciding feasability of MILP is known to be NP-complete \cite{Karp72}.
Translating \reach{} instance into MILP works similarly as described in the proof of Theorem~\ref{thm:reachinnp} up to the point of PWL nodes.
For example, consider  a node $v_j$ computing the function $ReLU(b_j + \sum_{i=1}^m c_{ji} \cdot x_{ji})$.
First, we add the (in-)equalities $b_j + \sum_{i=1}^m c_{ji} \cdot x_{ji} = y_j-s_j, y_j \geq 0$ and $s_j \geq 0$ to our MILP.
Second, we add the inequalities $y_j \leq M^+_j(1-z_j)$ and $s_j \leq M^-_jz_j$ where $z_j$ is an integer variable and $M^+_j, M^-_j$ are sufficiently
large positive constants such that $-M^-_j \leq b_j + \sum_{i=1}^m c_{ji} \cdot x_{ji} \leq M^+_j$ holds.
This ensures that $z_j = 0$ corresponds to $v_j$ being active and $z_j=1$ corresponds to $v_j$ being inactive. For general PWL activation
functions the idea is the same, making use of switch variables like $z_j$ and bounds like $M^-, M^+$ in a more elaborate fashion.
However, this translation requires that suitable bounds exist for each node, which in general
is only guaranteed if each input dimension is bounded by the input specification.

\medskip
As stated in the beginning, our proof implies the polynomial bound missing in the claim of Katz et al.
We recall the missing argument in \cite{KatzBDJK17}: it is not obvious, that there is an input $\boldsymbol{x}$ for each positive \reach instance, witnessing the validity $I$,
which is polynomial bounded by the size of $I$.
Taking a look at the proof of Lemma~\ref{thm:relulinprognp}, we see that the solution of a PWL-linear program
$\Phi$ is determined by the solution of a regular linear program $\Phi_{\boldsymbol{w}}$ linear in the size of $\Phi$. Furthermore, from the definition of neural networks
and specifications follows that all coefficients occurring in $\Phi_{\boldsymbol{w}}$ are rational. Then, we can use
a standard result about linear programs: if a linear program $L$ with rational coefficients has an optimal solution then it has an optimal one of size
polynomially bounded by $L$ (for example, see Theorem~4.4 in \cite{KorteV06}). Thus, our proof of Lemma~\ref{thm:relulinprognp}
implies that each PWL-linear program, resulting
from the translation of a \reach instance, has a solution of polynomial size and, thus, using the arguments of the proof of Theorem~\ref{thm:reachinnp}, we retrieve the missing bound.
\begin{corollary}
	Let $I = (N, \varphi_{\text{in}}, \varphi_{\text{out}})$ be a \reach instance. If  $I$ is a positive instance of \reach, then there is an input $\boldsymbol{x}$ satisfying $\varphi_{\text{in}}$
	such that $N(\boldsymbol{x})$ satisfies $\varphi_{\text{out}}$ and the size of $\boldsymbol{x}$ is polynomially bounded by the size of $I$.
	\label{cor:polybound}
\end{corollary}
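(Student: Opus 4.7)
The plan is to chase the construction used to prove Theorem~\ref{thm:reachinnp} and the proof of Lemma~\ref{thm:relulinprognp}, and observe that at the end of the chain one lands at a classical rational linear program for which polynomial-size feasible points are known to exist. First I would apply the translation of Theorem~\ref{thm:reachinnp} to obtain from $I$ a PWL-linear program $\Phi_I$ of size linear in $|I|$. All coefficients in $\Phi_I$ are rational: they are inherited from the rational weights, biases and specification constants of $I$ together with the rational parameters of the PWL activations used in $N$.

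Since $I$ is a positive instance, $\Phi_I$ is solvable. By the argument given in the proof of Lemma~\ref{thm:relulinprognp}, this entails the existence of a selection $\boldsymbol{w}$ of active linear parts (one component per PWL-equality, of size polynomial in $|I|$) such that the ordinary linear program $\Phi_{I,\boldsymbol{w}}$ is feasible with all slack variables $z_h$ strictly negative. The LP $\Phi_{I,\boldsymbol{w}}$ has size polynomial in $|I|$ and purely rational coefficients, so I would then invoke the classical fact that a feasible rational linear program admits a rational feasible point whose bit-size is polynomially bounded by the bit-size of the program (e.g.\ Theorem~4.4 of \cite{KorteV06}). Projecting any such solution onto the input variables of $N$ yields the witness $\boldsymbol{x}$: by the correspondence established in the proof of Theorem~\ref{thm:reachinnp}, $\varphi_{\text{in}}(\boldsymbol{x})$ holds and $N(\boldsymbol{x})$ satisfies $\varphi_{\text{out}}$, while the bit-size of $\boldsymbol{x}$ inherits the polynomial bound from the LP solution.

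The one subtlety that I expect to need some care is the mismatch between the strict inequality $z_h<0$ appearing in Lemma~\ref{thm:relulinprognp} and the weak-inequality formulation in which the cited LP size bound is usually stated. I would handle this by recasting the feasibility question as the linear optimization problem of minimizing $\sum_h z_h$ subject to the weak constraints of $\Phi_{I,\boldsymbol{w}}$: an optimal basic feasible solution is then rational of polynomial bit-size, and its negativity is inherited from the already-established feasibility with strictly negative $z_h$. Beyond this minor piece of bookkeeping, the corollary really only assembles ingredients that are already in place in the proofs of Theorem~\ref{thm:reachinnp} and Lemma~\ref{thm:relulinprognp}.
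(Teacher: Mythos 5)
Your proposal follows exactly the paper's own route: translate $I$ into the PWL-linear program of Theorem~\ref{thm:reachinnp}, pass to the polynomial-size rational linear program $\Phi_{\boldsymbol{w}}$ from the proof of Lemma~\ref{thm:relulinprognp}, invoke Theorem~4.4 of \cite{KorteV06} for a polynomial-size optimal solution, and project back to the input variables. The extra care you take with the strict inequalities on the $z_h$ is consistent with the minimization formulation already used in Lemma~\ref{thm:relulinprognp}, so the argument matches the paper's proof in substance.
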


Additionally, it is interesting to point out the role of witnesses for positive instances of the \reach problem: the arguments leading to the result of Theorem~\ref{thm:reachinnp} above show
that an assignment to the PWL nodes determining their active linear part serves as a witness.
This immediately yields a polynomial fragment of \reach.
\begin{corollary}
\reach for NN with a bounded number of nodes with PWL activation is decidable in polynomial time.
\label{sec:np_compl;cor:relu_nodes}
\end{corollary}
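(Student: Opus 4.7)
The plan is to derandomize the nondeterministic procedure hidden inside the proofs of Lemma~\ref{thm:relulinprognp} and Theorem~\ref{thm:reachinnp} by exhaustive enumeration. As remarked just before the corollary, the only genuinely nondeterministic step in deciding \reach is guessing, for every node with PWL activation, which of its linear parts is used. Once such a guess has been made, what remains is an ordinary linear program that can be solved deterministically in polynomial time by Karmarkar's algorithm~\cite{Karmarkar84}.

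First I would fix notation: let $k$ be the (constant) upper bound on the number of PWL-activated nodes in the given NN $N$, and let $m$ denote the maximum number of linear parts of any PWL activation appearing in $N$. The number $m$ is bounded by the representation size of $N$, since each breakpoint and each linear piece contributes to the encoding. Second, I would translate the instance $\mathcal{I} = (N,\varphi_{\text{in}},\varphi_{\text{out}})$ into a PWL-linear program $\Phi_{\mathcal{I}}$ exactly as in the proof of Theorem~\ref{thm:reachinnp}; this step is linear in $|\mathcal{I}|$.

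The key observation is that the set of candidate witnesses $\boldsymbol{w}$, each assigning an active linear part to every PWL-equality of $\Phi_{\mathcal{I}}$, has size at most $m^{k}$. Since $k$ is a constant, $m^{k}$ is polynomial in $|\mathcal{I}|$. I would therefore iterate over all such $\boldsymbol{w}$, construct the linear program $\Phi_{\boldsymbol{w}}$ as described in the proof of Lemma~\ref{thm:relulinprognp}, and solve it deterministically in polynomial time. The instance is positive iff at least one of these polynomially many linear programs admits a solution with all $z_i$ strictly negative, which gives an overall polynomial-time decision procedure.

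The only mild subtlety — and the step I would flag carefully rather than any deep obstacle — is confirming that ``bounded number of PWL nodes'' is read as bounded by a constant independent of the instance. Under this standard reading, the exponent $k$ in $m^{k}$ is a constant so the search space is polynomial; without it, the enumeration would blow up. Everything else is immediate from results already established in the preceding subsection.
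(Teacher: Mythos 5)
Your proposal is correct and matches the argument the paper intends: the witness is exactly the assignment of active linear parts to the PWL nodes, and with a constant bound $k$ on such nodes the at most $m^{k}$ candidate assignments (with $m$ bounded by the input size) can be enumerated and each resulting linear program solved deterministically in polynomial time. Your explicit caveat that ``bounded'' must mean bounded by a constant is the right reading and the only point the paper leaves implicit.
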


\subsection{NP-hardness}

A natural candidate for a lower bound proof for \reach{} is a polynomial reduction from \threesat{}, as attempted by Katz et al.\
\cite{KatzBDJK17} and Ruan et al.\ \cite{RuanHK18}. The underlying idea is to encode the structure
of a \threesat{} formula in a neural network and the existence of a satisfying assignment for this formula in the corresponding
input- and output-specifications. To ease the definition of the NN resulting from this reduction Katz et al.\ introduced the notion of
gadgets.

\begin{figure}[ht!]
\vspace*{-1mm}
	\centering
	\tikzset{every picture/.style={line width=0.75pt}} 

\begin{tikzpicture}[x=0.75pt,y=0.75pt,yscale=-.8,xscale=.8]
	
	\draw   (151.39,76) .. controls (151.39,73.24) and (153.63,71) .. (156.39,71) .. controls (159.15,71) and (161.39,73.24) .. (161.39,76) .. controls (161.39,78.76) and (159.15,81) .. (156.39,81) .. controls (153.63,81) and (151.39,78.76) .. (151.39,76) -- cycle ;
	\draw    (91.89,76) -- (148.39,76) ;
	\draw [shift={(151.39,76)}, rotate = 180] [fill={rgb, 255:red, 0; green, 0; blue, 0 }  ][line width=0.08]  [draw opacity=0] (3.57,-1.72) -- (0,0) -- (3.57,1.72) -- cycle    ;
	\draw    (161.39,76) -- (221.39,76) ;
	\draw  [fill={rgb, 255:red, 0; green, 0; blue, 0 }  ,fill opacity=1 ] (316.39,79.5) .. controls (316.39,76.46) and (318.86,74) .. (321.89,74) .. controls (324.93,74) and (327.39,76.46) .. (327.39,79.5) .. controls (327.39,82.54) and (324.93,85) .. (321.89,85) .. controls (318.86,85) and (316.39,82.54) .. (316.39,79.5) -- cycle ;
	\draw    (276.39,45) -- (314.12,77.54) ;
	\draw [shift={(316.39,79.5)}, rotate = 220.78] [fill={rgb, 255:red, 0; green, 0; blue, 0 }  ][line width=0.08]  [draw opacity=0] (3.57,-1.72) -- (0,0) -- (3.57,1.72) -- cycle    ;
	\draw    (276.39,115) -- (313.15,82.49) ;
	\draw [shift={(315.39,80.5)}, rotate = 138.5] [fill={rgb, 255:red, 0; green, 0; blue, 0 }  ][line width=0.08]  [draw opacity=0] (3.57,-1.72) -- (0,0) -- (3.57,1.72) -- cycle    ;
	\draw    (266.39,79.5) -- (313.39,79.5) ;
	\draw [shift={(316.39,79.5)}, rotate = 180] [fill={rgb, 255:red, 0; green, 0; blue, 0 }  ][line width=0.08]  [draw opacity=0] (3.57,-1.72) -- (0,0) -- (3.57,1.72) -- cycle    ;
	\draw    (327.39,79.5) -- (353.54,79.5) ;
	\draw [shift={(356.54,79.5)}, rotate = 180] [fill={rgb, 255:red, 0; green, 0; blue, 0 }  ][line width=0.08]  [draw opacity=0] (3.57,-1.72) -- (0,0) -- (3.57,1.72) -- cycle    ;
	\draw   (356.54,79.5) .. controls (356.54,76.46) and (359,74) .. (362.04,74) .. controls (365.07,74) and (367.54,76.46) .. (367.54,79.5) .. controls (367.54,82.54) and (365.07,85) .. (362.04,85) .. controls (359,85) and (356.54,82.54) .. (356.54,79.5) -- cycle ;
	\draw    (367.54,79.5) -- (396.25,79.5) ;
	\draw  [dash pattern={on 4.5pt off 4.5pt}] (276.39,35) -- (386.39,35) -- (386.39,125) -- (276.39,125) -- cycle ;
	\draw    (266.39,45) -- (276.39,45) ;
	\draw    (266.39,115) -- (276.39,115) ;
	\draw  [dash pattern={on 4.5pt off 4.5pt}] (101.39,35) -- (211.39,35) -- (211.39,125) -- (101.39,125) -- cycle ;
	\draw  [fill={rgb, 255:red, 0; green, 0; blue, 0 }  ,fill opacity=1 ] (334.25,188) .. controls (334.25,184.96) and (336.71,182.5) .. (339.75,182.5) .. controls (342.79,182.5) and (345.25,184.96) .. (345.25,188) .. controls (345.25,191.04) and (342.79,193.5) .. (339.75,193.5) .. controls (336.71,193.5) and (334.25,191.04) .. (334.25,188) -- cycle ;
	\draw  [fill={rgb, 255:red, 0; green, 0; blue, 0 }  ,fill opacity=1 ] (334.25,228) .. controls (334.25,224.96) and (336.71,222.5) .. (339.75,222.5) .. controls (342.79,222.5) and (345.25,224.96) .. (345.25,228) .. controls (345.25,231.04) and (342.79,233.5) .. (339.75,233.5) .. controls (336.71,233.5) and (334.25,231.04) .. (334.25,228) -- cycle ;
	\draw    (290.75,208.5) -- (331.54,189.28) ;
	\draw [shift={(334.25,188)}, rotate = 154.77] [fill={rgb, 255:red, 0; green, 0; blue, 0 }  ][line width=0.08]  [draw opacity=0] (3.57,-1.72) -- (0,0) -- (3.57,1.72) -- cycle    ;
	\draw    (290.75,208.5) -- (331.51,226.77) ;
	\draw [shift={(334.25,228)}, rotate = 204.15] [fill={rgb, 255:red, 0; green, 0; blue, 0 }  ][line width=0.08]  [draw opacity=0] (3.57,-1.72) -- (0,0) -- (3.57,1.72) -- cycle    ;
	\draw   (374.25,208.5) .. controls (374.25,205.74) and (376.49,203.5) .. (379.25,203.5) .. controls (382.01,203.5) and (384.25,205.74) .. (384.25,208.5) .. controls (384.25,211.26) and (382.01,213.5) .. (379.25,213.5) .. controls (376.49,213.5) and (374.25,211.26) .. (374.25,208.5) -- cycle ;
	\draw    (345.25,188) -- (371.8,206.77) ;
	\draw [shift={(374.25,208.5)}, rotate = 215.26] [fill={rgb, 255:red, 0; green, 0; blue, 0 }  ][line width=0.08]  [draw opacity=0] (3.57,-1.72) -- (0,0) -- (3.57,1.72) -- cycle    ;
	\draw    (345.25,228) -- (371.76,210.17) ;
	\draw [shift={(374.25,208.5)}, rotate = 146.08] [fill={rgb, 255:red, 0; green, 0; blue, 0 }  ][line width=0.08]  [draw opacity=0] (3.57,-1.72) -- (0,0) -- (3.57,1.72) -- cycle    ;
	\draw    (384.25,208.5) -- (424.25,208.5) ;
	\draw  [dash pattern={on 4.5pt off 4.5pt}] (250.75,163.5) -- (414.17,163.5) -- (414.17,253.5) -- (250.75,253.5) -- cycle ;
	\draw   (496.75,80) .. controls (496.75,76.96) and (499.21,74.5) .. (502.25,74.5) .. controls (505.29,74.5) and (507.75,76.96) .. (507.75,80) .. controls (507.75,83.04) and (505.29,85.5) .. (502.25,85.5) .. controls (499.21,85.5) and (496.75,83.04) .. (496.75,80) -- cycle ;
	\draw    (457.25,55) -- (493.74,78.86) ;
	\draw [shift={(496.25,80.5)}, rotate = 213.18] [fill={rgb, 255:red, 0; green, 0; blue, 0 }  ][line width=0.08]  [draw opacity=0] (3.57,-1.72) -- (0,0) -- (3.57,1.72) -- cycle    ;
	\draw    (457.25,105) -- (493.71,82.1) ;
	\draw [shift={(496.25,80.5)}, rotate = 147.86] [fill={rgb, 255:red, 0; green, 0; blue, 0 }  ][line width=0.08]  [draw opacity=0] (3.57,-1.72) -- (0,0) -- (3.57,1.72) -- cycle    ;
	\draw    (507.75,80) -- (566,80) ;
	\draw    (436.25,55) -- (457.25,55) ;
	\draw    (436.25,105) -- (457.25,105) ;
	\draw  [dash pattern={on 4.5pt off 4.5pt}] (447.25,35) -- (557.25,35) -- (557.25,125) -- (447.25,125) -- cycle ;
	\draw    (239.5,208.5) -- (292.89,208.5) ;
	
	\draw (126.39,72.6) node [anchor=south] [inner sep=0.75pt]  [font=\tiny,xscale=0.8,yscale=0.8]  {$-1$};
	\draw (156.39,67.6) node [anchor=south] [inner sep=0.75pt]  [font=\tiny,xscale=0.8,yscale=0.8]  {$1$};
	\draw (154.89,33) node [anchor=south] [inner sep=0.75pt]  [font=\small,xscale=0.8,yscale=0.8] [align=left] {not};
	\draw (298.89,63.35) node [anchor=south west] [inner sep=0.75pt]  [font=\tiny,xscale=0.8,yscale=0.8]  {$-1$};
	\draw (291.39,76.1) node [anchor=south] [inner sep=0.75pt]  [font=\tiny,xscale=0.8,yscale=0.8]  {$-1$};
	\draw (297.89,101.15) node [anchor=north west][inner sep=0.75pt]  [font=\tiny,xscale=0.8,yscale=0.8]  {$-1$};
	\draw (321.89,88.4) node [anchor=north] [inner sep=0.75pt]  [font=\tiny,xscale=0.8,yscale=0.8]  {$1$};
	\draw (341.96,76.1) node [anchor=south] [inner sep=0.75pt]  [font=\tiny,xscale=0.8,yscale=0.8]  {$-1$};
	\draw (331.89,32.67) node [anchor=south] [inner sep=0.75pt]  [font=\small,xscale=0.8,yscale=0.8] [align=left] {or};
	\draw (362.04,70.6) node [anchor=south] [inner sep=0.75pt]  [font=\tiny,xscale=0.8,yscale=0.8]  {$1$};
	\draw (339.75,179.1) node [anchor=south] [inner sep=0.75pt]  [font=\tiny,xscale=0.8,yscale=0.8]  {$\epsilon $};
	\draw (339.75,236.9) node [anchor=north] [inner sep=0.75pt]  [font=\tiny,xscale=0.8,yscale=0.8]  {$\epsilon -1$};
	\draw (310.5,194.85) node [anchor=south east] [inner sep=0.75pt]  [font=\tiny,xscale=0.8,yscale=0.8]  {$-1$};
	\draw (310.5,221.65) node [anchor=north east] [inner sep=0.75pt]  [font=\tiny,xscale=0.8,yscale=0.8]  {$1$};
	\draw (361.75,194.85) node [anchor=south west] [inner sep=0.75pt]  [font=\tiny,xscale=0.8,yscale=0.8]  {$1$};
	\draw (361.75,221.65) node [anchor=north west][inner sep=0.75pt]  [font=\tiny,xscale=0.8,yscale=0.8]  {$1$};
	\draw (332.75,161.5) node [anchor=south] [inner sep=0.75pt]  [font=\small,xscale=0.8,yscale=0.8] [align=left] {bool$\displaystyle _{\epsilon }$};
	\draw (452.85,69) node [anchor=north west][inner sep=0.75pt]  [font=\small,rotate=-90,xscale=0.8,yscale=0.8]  {$\dotsc $};
	\draw (502.75,33) node [anchor=south] [inner sep=0.75pt]  [font=\small,xscale=0.8,yscale=0.8] [align=left] {and$\displaystyle _{n}$};
	\draw (478.75,64.35) node [anchor=south west] [inner sep=0.75pt]  [font=\tiny,xscale=0.8,yscale=0.8]  {$1$};
	\draw (478.75,96.15) node [anchor=north west][inner sep=0.75pt]  [font=\tiny,xscale=0.8,yscale=0.8]  {$1$};

\end{tikzpicture}\vspace*{-4mm}
	\caption{Gadgets used in the reduction from \threesat{} to \reach{}. A non-weighted outgoing edge of a gadget is connected to a weighted incoming edge of another gadget in the actual construction or is considered an output of the overall neural network.}
	\label{sec:np_compl;fig:3sat_gadgets}\vspace*{-3mm}
\end{figure}
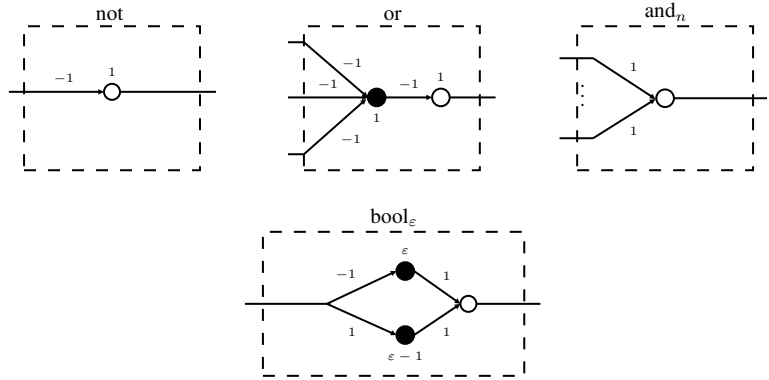

\begin{definition}
	Let $\epsilon \in \Real$ be a small constant and $n \in \Nat$. We call NN, which compute one of the following functions, \emph{gadgets}:
	\begin{align*}
		 \notg(x) &= \id(1-x) \\
		 \org(x_1, x_2, x_3) &= \id(1-\ReLU(1-x_1-x_2-x_3)) \\
		 \andg(x_1, \dotsc, x_n) &= \id(x_1 + \dotsb + x_n) \\
		 \boolg(x) &= \id(\ReLU(\epsilon-x) + \ReLU(x+\epsilon-1))
	\end{align*}
	\label{sec:np_compl;def:gadgets}
\end{definition}

The obvious candidates for these gadgets are visualized in Figure~\ref{sec:np_compl;fig:3sat_gadgets}.
The name of the first three gadgets gives a direct indication of their intended properties.
\begin{lemma}
	Let $n \in \Nat$ be a natural number and all $r_i \in \{0,1\}$. It holds that
	\begin{enumerate}
    \itemsep=0.8pt
		\item $\notg(1) = 0$ and $\notg(0)=1$,
		\item $\org(0,0,0) = 0$,
		\item \label{prop:org_feature} $\org(r_1, r_2, r_3) = 1$ if and only if at least one $r_i$ is equal to $1$,
		\item \label{prop:andg_feature} $\andg(r_1, \dotsc, r_n) = n$ if and only if all $r_i$ are equal to $1$.
	\end{enumerate}
	\label{sec:np_completeness;lem:gadget_props}
\end{lemma}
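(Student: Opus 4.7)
The plan is to verify each of the four claims by direct computation from the definitions in Definition~\ref{sec:np_compl;def:gadgets}, using that $\id$ is the identity and that $\ReLU(z) = 0$ for $z \leq 0$ and $\ReLU(z) = z$ for $z \geq 0$. None of the arguments requires anything beyond evaluating a closed form on a handful of values, so the work really is an organised case split.

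For (1), I would plug $x = 1$ and $x = 0$ into $1 - x$ to obtain $0$ and $1$, respectively. For (2), evaluating $1 - \ReLU(1 - 0 - 0 - 0) = 1 - \ReLU(1) = 1 - 1 = 0$ is immediate. The only place where a small argument is needed is (3): I would set $s := r_1 + r_2 + r_3$ and note that $s \in \{0,1,2,3\}$ because each $r_i \in \{0,1\}$. Then $\org(r_1,r_2,r_3) = 1 - \ReLU(1-s)$. If $s = 0$, then $1 - s = 1$ and the gadget evaluates to $1 - 1 = 0$. If $s \geq 1$, then $1 - s \leq 0$, so $\ReLU(1-s) = 0$ and the gadget outputs $1$. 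Both directions of the \emph{iff} follow from this dichotomy.

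For (4), I would observe that $\andg(r_1,\dots,r_n) = r_1 + \dots + r_n$ because the outer $\id$ is the identity and there is no further non-linearity. Since each $r_i \in \{0,1\}$ and hence $r_i \leq 1$, the sum equals $n$ precisely when every summand attains its maximum $1$, which gives the claimed equivalence.

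The only point that deserves attention, and which I would call out explicitly in the write-up, is that statements (3) and (4) rely on the hypothesis $r_i \in \{0,1\}$: without it, one could realise $\org = 1$ with a single large real input (e.g.\ $r_1 = 2$) that does not correspond to any Boolean truth value, and $\andg$ could reach $n$ via cancellation among non-Boolean reals. Since the subsequent \threesat{} reduction will feed these gadgets outputs of the $\boolg$-gadget (whose role is precisely to enforce near-Boolean behaviour), restricting to $r_i \in \{0,1\}$ here is legitimate and the lemma is exactly the Boolean-input guarantee needed downstream. I expect no genuine obstacle in the proof; the main care is simply in writing the ReLU case distinction cleanly.
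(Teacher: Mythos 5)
Your proof is correct and follows exactly the route the paper intends: the paper's own proof simply states that these are "straightforward implications of the functions computed by the respective gadgets," and your case analysis on the sum $s = r_1+r_2+r_3$ and on the ReLU branch is precisely that straightforward verification spelled out. Your closing remark about the necessity of the hypothesis $r_i \in \{0,1\}$ is a useful observation that matches the paper's later discussion of why the \boolg-gadgets are needed.
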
	
\begin{proof}
	These are straightforward implications of the functions computed by the respective gadgets.
\end{proof}

To understand how the reduction works, we take the \threesat{} instance $\psi= (X_0\lor X_1 \lor X_2) \land (\neg X_0 \lor X_1 \lor \neg X_2) \land (\neg X_1 \lor X_2 \lor X_3)$ with four propositional variables and three clauses as an example.
Furthermore, for the moment we assume that NN can take values from $\{0,1\}$ as inputs only.
Then, the resulting network $N_\psi: \Real^4 \rightarrow \Real$ computes
\begin{displaymath}
	\mathit{and}_3(\org(x_0,x_1,x_2), \allowbreak \org(\notg(x_0),x_1,\notg(x_2)), \allowbreak \org(\notg(x_1),x_2,x_3))
\end{displaymath}
and is visualized in Figure~\ref{sec:np_compl;fig:3sat_network_schema}. Note that $N_\psi$ includes redundant identity nodes matching included \notg-gadgets in order to ensure a layerwise structure.
The function computed by $N_\psi$ is described as follows. Each of the three \org-gadgets together with their connected \notg-gadgets and \id-nodes represents one
of the clauses in $\psi$. From case (\ref{prop:org_feature}) in Lemma~\ref{sec:np_completeness;lem:gadget_props} we obtain that, if an \org-gadget outputs $1$, then its current input, viewed as an assignment to the propositional variables in $\psi$, satisfies the corresponding clause. The $\mathit{and}_3$-gadget simply sums up all of its inputs and, with case (\ref{prop:andg_feature}) of Lemma~\ref{sec:np_completeness;lem:gadget_props}, we get that $y$ is equal to $3$
iff each \org-gadget outputs one. Therefore, with the output specification $\varphi_{\text{out}}(y) := y=3$ and trivial input specification $\varphi_{\text{in}}=\top$,
we get a reduction from \threesat to \reach, provided that input values are externally restricted to $\{0,1\}$.

\begin{figure}[!ht]
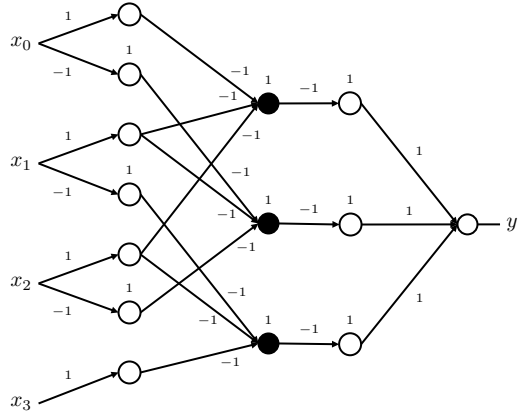

	\centering
	\include{graphics/3sat_redu_schema}\vspace*{-4mm}
	\caption{NN resulting from the reduction of the \threesat-formula $(X_0\lor X_1 \lor X_2) \land (\neg X_0 \lor X_1 \lor \neg X_2) \land (\neg X_1 \lor X_2 \lor X_3)$ under the assumption that NN are defined over $\{0,1\}$ only.}
	\label{sec:np_compl;fig:3sat_network_schema}\vspace*{-3mm}
\end{figure}

\medskip
But NN are defined for all real-valued inputs, so we need further adjustments to make the reduction complete. First, note that it
is impossible to write an input specification $\varphi_{\text{in}}(\boldsymbol{x})$ which is satisfied by $\boldsymbol{x}$ iff
$\boldsymbol{x} \in \{0,1\}^n$ because $\{0,1\}^n$ is not a hyperrectangle in $\Real^n$ and conjunctions of inequalities only specify
hyperrectangles. This is where we make use of \boolg-gadgets. Katz et al.\ claim the
following: if $x \in [0;1]$ then we have $z \in [0;\epsilon]$ iff $x \in [0;\epsilon]$ or $x \in [1-\epsilon; 1]$. Thus, by
connecting a \boolg-gadget to each input $x_i$ in $N_\psi$ including adding further identity nodes for ensuring a layerwise structure and using the simple specifications $\varphi_{\text{in}} := \bigwedge_{i=0}^3 x_i \geq 0 \land x_i \leq 1$ and
$\varphi_{\text{out}} := \bigwedge_{i=0}^3 z_i \geq 0 \land z_i \leq \epsilon \land y \geq 3(1-\epsilon) \land y \leq 3$
we would get a correct translation of $\psi$. Note that the constraint on $y$ is no longer $y=3$ as
the valid inputs to $N_\psi$, determined by the \boolg-gadgets and their output constraints, are not exactly $0$ or exactly $1$.

However, the claim about \boolg-gadgets is wrong. Consider a \boolg-gadget with very small $\epsilon$ such that it is safe to assume
$\epsilon < 2\epsilon < 1-\epsilon$. Then, for $x=2\epsilon$ we have $z = 0$, which contradicts the claim. In fact, it can be shown that for each $\epsilon \leq \frac{1}{2}$ and each input $x \in [0;1]$ the output $z$ is an element of $[0;\epsilon]$.
Clearly, this is not the intended property of these gadgets. But with some adjustments to the \boolg-gadgets we can make the reduction work.
\begin{lemma}
	Let $\boolrepairedg(x) = \id\left(\ReLU\left(\frac{1}{2} - x \right) + \ReLU\left(x - \frac{1}{2}\right) - \frac{1}{2}\right)$ be a gadget.
	It holds that $\boolrepairedg(x)=0$ if and only if $x=0$ or $x=1$.
	\label{sec:np_compl;lem:bool_gadget_prop}
\end{lemma}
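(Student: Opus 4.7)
The plan is to proceed by a straightforward case analysis on the value of $x$ relative to the breakpoint $\frac{1}{2}$ of the two ReLU activations involved. Since the outermost function is the identity, it suffices to determine when the inner expression $\ReLU(\frac{1}{2}-x) + \ReLU(x-\frac{1}{2}) - \frac{1}{2}$ equals zero.

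First I would consider the case $x \leq \frac{1}{2}$. Here $\ReLU(\frac{1}{2}-x) = \frac{1}{2}-x$ and $\ReLU(x-\frac{1}{2}) = 0$, so the expression simplifies to $-x$, which vanishes precisely when $x=0$. Next I would handle the symmetric case $x > \frac{1}{2}$, where $\ReLU(\frac{1}{2}-x) = 0$ and $\ReLU(x-\frac{1}{2}) = x - \frac{1}{2}$, yielding the expression $x-1$, which vanishes precisely when $x=1$.

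Combining both cases gives the equivalence in the statement: $\boolrepairedg(x) = 0$ if and only if $x \in \{0,1\}$. The direction from right to left follows by plugging in, and the direction from left to right follows because each case yields a unique root. There is no real obstacle here; the argument is essentially a verification. The only subtlety is to make sure the two cases $x \leq \frac{1}{2}$ and $x > \frac{1}{2}$ are exhaustive and non-overlapping in the way the PWL activation is defined, which is immediate from the definition of $\ReLU$ given in the preliminaries.
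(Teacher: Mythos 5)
Your proof is correct and matches the paper's argument: both reduce $\boolrepairedg$ to the piecewise form $-x$ for $x$ below $\frac{1}{2}$ and $x-1$ above, then read off the roots $0$ and $1$. The only cosmetic difference is where the boundary point $x=\frac{1}{2}$ is assigned, which is immaterial since both linear pieces agree there.
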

\begin{proof}
	Note that the function $\boolrepairedg(x)$ is
	equivalent to
	\begin{displaymath}
		\boolrepairedg(x) = \begin{cases}
			-x & \text{if } x < \frac{1}{2} \\
			x-1 & \text{otherwise.}
		\end{cases}
	\end{displaymath}
	From this we immediately get that $\boolrepairedg(x) = 0$ if $x = 0$ or $x = 1$, and $\boolrepairedg(x) \neq 0$ for all other values of $x$.
\end{proof}

Remember that $\id$ and $\ReLU$ are PWL functions. Now, replacing all \boolg-gadgets with \boolrepairedg-gadgets in the construction and using the simple specifications $\varphi_{\text{in}} = \top$ and $\varphi_{\text{out}} = \bigwedge_{i=0}^{n-1}z_i=0 \land y=m$ for a \threesat-instance with $n$ propositional variables and $m$ clauses, we get a correct reduction from \threesat{} to \reach{}. 
\begin{theorem}
	\reach{} is NP-hard.
	\label{sec:np_compl;th:np_hardness}
\end{theorem}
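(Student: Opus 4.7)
The plan is to formalise the reduction from \threesat{} that has been sketched in the preceding paragraphs. Given an instance $\psi$ of \threesat{} with propositional variables $X_0,\ldots,X_{n-1}$ and clauses $C_0,\ldots,C_{m-1}$, I would build a neural network $N_\psi$ of input dimension $n$ as follows. Each input $x_i$ is fed, on the one hand, into a \boolrepairedg-gadget producing an output $z_i$, and on the other hand, via an $\id$- or a \notg-gadget, into those \org-gadgets representing the clauses in which $X_i$ occurs positively or negatively. The three literal outputs of clause $C_j$ are routed into one \org-gadget with output $o_j$, and all $o_j$ are then combined by a single \andg-gadget of arity $m$ whose output is $y$. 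As in Figure~\ref{sec:np_compl;fig:3sat_network_schema}, auxiliary identity nodes equalise depths so that the construction is strictly layerwise; its total size is clearly linear in $|\psi|$.

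For the specifications I would take $\varphi_{\text{in}} := \top$ and $\varphi_{\text{out}} := \bigwedge_{i=0}^{n-1} z_i = 0 \,\land\, y = m$, both simple and of size linear in $n+m$. Correctness in one direction is immediate: a satisfying assignment $\alpha$ of $\psi$ yields the input $x_i := \alpha(X_i) \in \{0,1\}$; by Lemma~\ref{sec:np_compl;lem:bool_gadget_prop} each $z_i = 0$, by Lemma~\ref{sec:np_completeness;lem:gadget_props} each \org-gadget outputs $1$ because its clause is satisfied, and the final \andg-gadget therefore outputs $y = m$.

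For the converse direction, any witness $(x_0,\ldots,x_{n-1})$ of reachability must satisfy $z_i = 0$ for all $i$, which by Lemma~\ref{sec:np_compl;lem:bool_gadget_prop} forces $x_i \in \{0,1\}$. Interpreting these as a truth assignment, the \notg-outputs are again in $\{0,1\}$. The key observation is that on Boolean inputs the formula $\org(r_1,r_2,r_3) = 1 - \max(0,1 - r_1 - r_2 - r_3)$ always takes a value in $\{0,1\}$, and in particular never exceeds $1$. Hence the constraint $y = m = o_0 + \cdots + o_{m-1}$ combined with the uniform upper bound $o_j \leq 1$ forces every $o_j = 1$, and by Lemma~\ref{sec:np_completeness;lem:gadget_props}(\ref{prop:org_feature}) the assignment then satisfies every clause, so $\psi$ is satisfiable.

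I expect the main obstacle to lie in this last step rather than in the construction itself: without the explicit upper bound $o_j \leq 1$ on Boolean inputs, one cannot pass from the single aggregate equality $y = m$ to per-clause satisfaction, because large positive values of some $o_j$ could compensate for small or even negative values of others. It is precisely for this reason that the \org-gadget is designed around the clipping $\id(1 - \ReLU(\cdots))$ instead of a bare sum, and it is equally why the replacement of \boolg{} by \boolrepairedg{} from Lemma~\ref{sec:np_compl;lem:bool_gadget_prop} is essential: without it, fractional inputs $x_i$ could still satisfy $z_i = 0$ and thereby leak non-Boolean values into the \org- and \andg-layers, breaking the soundness argument above.
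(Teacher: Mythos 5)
Your proposal is correct and follows essentially the same route as the paper: the same gadget-based reduction from \threesat{} with \boolrepairedg-gadgets enforcing Boolean inputs via $z_i=0$, the specifications $\varphi_{\text{in}}=\top$ and $\varphi_{\text{out}}=\bigwedge_i z_i = 0 \land y = m$, and the observation that each \org-output is bounded by $1$ so that $y=m$ forces every clause gadget to output exactly $1$. If anything, your write-up makes the converse direction more explicit than the paper does, which leaves that step largely implicit after Lemma~\ref{sec:np_compl;lem:bool_gadget_prop}.
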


One could argue that a network $N$ resulting from the reduction of \threesat{} is not a typical feed-forward neural network.
The reason is that it is common to use the same activation function across all layers , which is not the case for NN resulting from this reduction as they include both ReLU- and identity nodes in hidden layers.
However, we can modify the reduction with the help of the following lemma.
\begin{lemma}
	Let $N$ be a NN and $v$ be an identity node in $N$ computing $\sum_{i=1}^k c_i x_i + b$ and which is connected with weights $c'_1, \dotsc, c'_l$ to nodes in a following layer of $N$. We can replace $v$ with two ReLU nodes $v'$ and $v''$ such that the resulting network $N'$ computes the same function as $N$.
	\label{sec:np_compl;lem:id_replace}
\end{lemma}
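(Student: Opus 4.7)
The plan is to exploit the well-known identity $z = \ReLU(z) - \ReLU(-z)$, which holds for every $z \in \Real$: if $z \ge 0$ then $\ReLU(z)-\ReLU(-z) = z - 0 = z$, and if $z < 0$ then $\ReLU(z)-\ReLU(-z) = 0 - (-z) = z$. Using this, the linear value $z := \sum_{i=1}^k c_i x_i + b$ computed by the identity node $v$ can be recovered as a signed sum of two ReLU values, so $v$ can be replaced by two ReLU nodes that together feed the next layer with exactly the same contribution.

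Concretely, I would construct $N'$ from $N$ as follows. Delete $v$ and introduce in its layer two ReLU nodes $v'$ and $v''$, each connected to the same predecessor nodes as $v$ was. Set the incoming weights and bias of $v'$ to $c_1,\dots,c_k$ and $b$, so that $v'$ outputs $\ReLU(z)$. Set the incoming weights and bias of $v''$ to $-c_1,\dots,-c_k$ and $-b$, so that $v''$ outputs $\ReLU(-z)$. For every successor node $w$ of $v$ that received the output of $v$ with weight $c'_j$, add an edge from $v'$ to $w$ with weight $c'_j$ and an edge from $v''$ to $w$ with weight $-c'_j$; leave the bias of $w$ unchanged, and leave all other weights and biases in $N$ untouched.

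To verify correctness, fix any input $\boldsymbol{x}$ to the network and let $z = \sum_{i=1}^k c_i x_i + b$ be the preactivation at $v$ in $N$ on that input. Since the predecessors of $v$ and the predecessors of $v',v''$ are the same and receive identical values in $N$ and $N'$ by induction on layer depth, the output of $v$ in $N$ equals $z$, while the outputs of $v'$ and $v''$ in $N'$ are $\ReLU(z)$ and $\ReLU(-z)$, respectively. For each successor $w$, the combined contribution of $v',v''$ to the preactivation of $w$ in $N'$ is $c'_j\ReLU(z) + (-c'_j)\ReLU(-z) = c'_j\bigl(\ReLU(z)-\ReLU(-z)\bigr) = c'_j z$, which is exactly the contribution of $v$ to $w$ in $N$. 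All other incoming terms of $w$ agree in $N$ and $N'$ by construction, so the preactivations, and hence the outputs, of every node from the layer of $w$ onwards coincide.

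There is essentially no obstacle here beyond making the bookkeeping precise; the only thing to note is that the replacement respects the strictly layerwise structure (both $v'$ and $v''$ sit in the original layer of $v$, and successors in subsequent layers see two edges instead of one), so iterating this construction over all identity nodes converts any NN with mixed ReLU/identity activations into an equivalent pure-ReLU network of comparable size. In particular, applying Lemma~\ref{sec:np_compl;lem:id_replace} to every identity node in the reduction preceding Theorem~\ref{sec:np_compl;th:np_hardness} yields a ReLU-only NN computing the same function, which is what is needed for the ReLU-only hardness strengthening announced in Corollary~\ref{cor:reluonly}.
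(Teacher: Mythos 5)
Your construction is exactly the one the paper uses: $v'$ computes $\ReLU(z)$ and $v''$ computes $\ReLU(-z)$ with outgoing weights $c'_j$ and $-c'_j$ respectively, exploiting $z = \ReLU(z) - \ReLU(-z)$. Your write-up just spells out the verification that the paper leaves as ``straightforward,'' so it is correct and essentially identical in approach.
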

\begin{proof}
	Let $N$ and $v$ be as stated above. Then, $v'$ computes $\ReLU(\sum_{i=1}^k c_i x_i + b)$ and $v''$ computes the function $\ReLU(\sum_{i=1}^k -c_i x_i - b)$.
	Furthermore, $v'$ is connected with weights $c'_1, \dotsc, c'_l$ to the same nodes as $v$ and $v''$ with weights $-c'_1, \dotsc, -c'_l$.
	It is straightforward that the resulting network $N'$ still computes the same function.
\end{proof}
We can use Lemma~\ref{sec:np_compl;lem:id_replace} to adjust the gadgets of Figure~\ref{sec:np_compl;fig:3sat_gadgets},
resulting in gadgets using ReLU-activations only, leading to an even stronger lower bound.
\begin{corollary}
	\reach{} is NP-hard if restricted to instances with networks using ReLU activations only.
	\label{cor:reluonly}
\end{corollary}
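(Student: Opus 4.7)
The plan is to take the reduction underlying Theorem~\ref{sec:np_compl;th:np_hardness} and post-process it so that every node in every hidden layer and in the output layer ends up being a ReLU node. The reduction as given produces a network $N_\psi$ in which identity nodes appear in three roles: (i) inside \notg-, \org-, \andg- and \boolrepairedg-gadgets as their outer identity wrapper, (ii) as ``filler'' identity nodes added to preserve the layered structure across layers of different depth, and (iii) as the output node itself. In every one of these roles the node is a pure linear combination of the previous layer's outputs plus a bias, i.e.\ exactly the shape required by Lemma~\ref{sec:np_compl;lem:id_replace}.

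First, I would enumerate the identity nodes of $N_\psi$ layer by layer from input to output. For each identity node $v$ computing $\sum_i c_i x_i + b$ with outgoing weights $c'_1,\ldots,c'_l$, I invoke Lemma~\ref{sec:np_compl;lem:id_replace} to replace $v$ by two ReLU nodes $v',v''$ in the same layer: $v'$ computes $\ReLU(\sum_i c_i x_i + b)$ and $v''$ computes $\ReLU(-\sum_i c_i x_i - b)$, with outgoing weights $c'_j$ and $-c'_j$ respectively. Since $\sum_i c_i x_i + b = \ReLU(\sum_i c_i x_i + b) - \ReLU(-\sum_i c_i x_i - b)$, the function computed by the overall network is unchanged. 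Crucially, because $v'$ and $v''$ sit in the same layer as $v$, this transformation preserves the layered feed-forward structure, so subsequent applications of the lemma to other identity nodes remain well-defined.

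For the output layer I would simply apply the same replacement; the output dimension doubles (each former single output $y$ is now represented as the difference of two ReLU outputs $y_+$ and $y_-$), and each output constraint $y = c$ from $\varphi_{\text{out}}$ is rewritten as $y_+ - y_- = c$, which is still a conjunction of linear inequalities and hence a valid specification. The input layer needs no change since its nodes only carry through $x_i$.

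It remains to check polynomiality: $N_\psi$ already has size polynomial in the input \threesat instance, and the transformation at most doubles the number of nodes in each layer and doubles the number of outgoing edges; no new layers are introduced. Hence the new network $N'_\psi$ is still of polynomial size and is still constructible in polynomial time, and by Lemma~\ref{sec:np_compl;lem:id_replace} it computes the same function as $N_\psi$. The input and (appropriately rewritten) output specifications retain their form as conjunctions of linear inequalities. The main thing to be careful about is (a) keeping track of the sign flips on outgoing weights and (b) rewriting the output constraints consistently with the $y_+/y_-$ split; these are bookkeeping steps rather than conceptual obstacles. The reduction from \threesat{} then goes through verbatim, establishing NP-hardness for the ReLU-only fragment.
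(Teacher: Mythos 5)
Your proposal is correct and follows essentially the same route as the paper, which likewise obtains the corollary by applying Lemma~\ref{sec:np_compl;lem:id_replace} to eliminate all identity nodes from the gadgets of the \threesat{} reduction. Your explicit treatment of the output node --- splitting $y$ into $y_+ - y_-$ and rewriting each output constraint accordingly --- is a detail the paper leaves implicit, but it is a refinement of the same argument rather than a different approach.
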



\section{NP-hardness holds in very restricted cases already}
\label{sec:low_compl_fragments}

Neural networks and specifications resulting from the reduction of \threesat{} to \reach{} presented in the previous section are already quite restricted;
the NN possess only a small, fixed number of layers and output size of one and the specifications are simple.
In this section we strengthen the NP-hardness result by constructing even simpler classes of instances for which \reach is NP-hard already.
Our main focus is on simplifying the structure of the NN rather than the specifications.
However, when trying to simplify the networks, it is tempting to increase complexity in the specifications.
To avoid this, we aim for keeping the specifications simple.

\medskip
Section~\ref{sec:singlelayer} studies the possibility to make the NN structurally as simple as possible; Sect.~\ref{sec:restricted_weights}
shows that requirements on weights and biases can be relaxed whilst retaining NP-hardness.

\subsection{Neural Networks of a simple structure}
\label{sec:singlelayer}

We consider NN with just one hidden layer of ReLU nodes and an output dimension of one. In the same way as in the previous section, we can establish a reduction from \threesat{}.
\begin{theorem}
	\reach{} is NP-hard for NN with output dimension one, a single hidden layer and simple specifications.
	\label{sec:low_compl_fagments;th:one_layer_output_one}
\end{theorem}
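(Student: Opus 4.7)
The plan is to adapt the reduction from Section~\ref{sec:np_compl} and compress it into a single hidden layer. The key observation is twofold: first, the compositions of $\notg$, $\org$ and $\andg$ gadgets used there collapse into one affine combination of ReLU nodes applied to affine functions of the inputs; second, the $\boolrepairedg$-based Boolean enforcement that Section~\ref{sec:np_compl} handled via a separate output constraint per input can instead be folded into the same scalar output as an additive penalty.

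Concretely, given a \threesat{} formula $\psi$ with $n$ variables and $m$ clauses, for each clause $C_j = \ell_{j,1} \lor \ell_{j,2} \lor \ell_{j,3}$ define the affine literal encoding $L_{j,k}(x) = x_i$ if $\ell_{j,k} = X_i$ and $L_{j,k}(x) = 1 - x_i$ if $\ell_{j,k} = \neg X_i$. The network $N_\psi$ has $n$ inputs and one hidden layer with $m+2n$ ReLU nodes: the clause nodes $h_j = \ReLU(1 - L_{j,1}(x) - L_{j,2}(x) - L_{j,3}(x))$ for $j=1,\dotsc,m$ and, for each input, the Boolean enforcement nodes $\ReLU(\tfrac{1}{2}-x_i)$ and $\ReLU(x_i-\tfrac{1}{2})$; they feed a single identity output
\[
  y \;=\; \bigl(m - \tfrac{n}{2}\bigr) \;-\; \sum_{j=1}^m h_j \;+\; \sum_{i=1}^n \bigl(\ReLU(\tfrac{1}{2}-x_i) + \ReLU(x_i-\tfrac{1}{2})\bigr).
\]
I pair this with the simple specifications $\varphi_{\text{in}} := \bigwedge_{i=1}^n (0 \leq x_i \land x_i \leq 1)$ and $\varphi_{\text{out}} := (y = m)$.

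Correctness then follows from two routine inequalities valid for $x \in [0,1]^n$: each $h_j \geq 0$ with equality iff $L_{j,1}(x) + L_{j,2}(x) + L_{j,3}(x) \geq 1$, and, by the proof of Lemma~\ref{sec:np_compl;lem:bool_gadget_prop}, $\ReLU(\tfrac{1}{2}-x_i) + \ReLU(x_i-\tfrac{1}{2}) - \tfrac{1}{2} = |x_i - \tfrac{1}{2}| - \tfrac{1}{2} \leq 0$ with equality iff $x_i \in \{0,1\}$. Combining the two yields $y \leq m$ throughout the input hyperrectangle and $y = m$ iff all $x_i \in \{0,1\}$ and every clause has at least one true literal, i.e.\ iff $\psi$ is satisfied by the Boolean assignment $X_i \mapsto x_i$. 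The reduction is clearly polynomial.

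The main obstacle is precisely this Boolean enforcement under the two restrictions. Output dimension one forbids the per-input output constraints $z_i = 0$ used in Section~\ref{sec:np_compl}, and since simple specifications only describe hyperrectangles, $x_i \in \{0,1\}$ cannot be stated in the input specification either. The fix above works because $\boolrepairedg$ is non-positive on $[0,1]$, so its sum over the inputs serves as a penalty that is tight exactly on Boolean assignments, and restricting $x$ to $[0,1]^n$ via the simple input bounds rules out spurious non-Boolean witnesses -- outside the box the penalty terms can become positive and compensate for clause deficits, which is the subtle point the proof must address explicitly.
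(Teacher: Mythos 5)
Your construction is essentially the paper's: the same hidden layer of $2n+m$ ReLU nodes (two Boolean-enforcement nodes per variable plus one merged clause node per clause), the same box input specification, and the same tightness argument that the output attains its maximum iff all inputs are Boolean and all clause terms vanish; the only difference is that you add the constant bias $m-\tfrac{n}{2}$ to the output node so the target becomes $y=m$ instead of the paper's $y=\tfrac{n}{2}$, which is an immaterial affine shift. The proof is correct and follows the paper's route.
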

\begin{proof}
Let $\psi$ be a \threesat{} formula with $n$ propositional variables $X_i$ and $m$ clauses $l_j$. We slightly modify the construction
of a network $N$ in the proof of Theorem~\ref{sec:np_compl;th:np_hardness}. First, we remove the last identity node of all
\boolrepairedg-gadgets in $N$ and directly connect the two outputs of their ReLU nodes to the \andg-gadget, weighted with $1$.
Additionally, we merge \notg-gadgets and \org-gadgets in $N$. Consider an \org-gadget corresponding to some clause $l_j$. The
merged gadget has three inputs $x_{j_0}, x_{j_1}, x_{j_2} $ and computes $\ReLU\bigl(1 - \sum_{k=0}^2 f_j(x_{j_k})\bigr)$ where
$f_j(x_{j_k}) = x_{j_k}$ if $X_{j_k}$ occurs positively in $l_j$ and $f_j(x_{j_k}) = 1 - x_{ij}$ if it occurs negatively. It is
straightforward to see that the output of such a gadget is $0$ if at least one positively (resp.\ negatively) weighted input is $0$,
resp.\ $1$, and that the output is $1$ if all positively weighted inputs are $1$ and all negatively weighted inputs are $0$. These
merged gadgets are connected with weight $-1$ to the \andg-gadget. Once done for all \boolrepairedg-, \notg- and
\org-gadgets, the resulting network $N'$ computes the function
\begin{displaymath}
	\id\left(\textstyle\sum_{i=0}^{n-1} \ReLU\bigl(\frac{1}{2} - x_i\bigr) + \ReLU\bigl(x_i -\frac{1}{2}\bigr) - \textstyle\sum_{i=0}^{m-1} \ReLU\bigl(1 - \textstyle\sum_{j=0}^2 f_i(x_{ij})\bigr)\right).
\end{displaymath}
Note that $N'$ has input dimension $n$, a single hidden layer of $2n+m$ ReLU nodes and output dimension $1$ which we refer to by $y$.

\medskip
Now take the simple specifications $\varphi_{\text{in}} = \bigwedge_{i=0}^{n-1} x_i \geq 0 \land x_i \leq 1$ and
$\varphi_{\text{out}} = y = \frac{n}{2}$.  We argue that the following holds for a solution to
$(N', \varphi_{\text{in}}, \varphi_{\text{out}})$: (\textsc{i}) all $x_i$ are either $0$ or $1$, and (\textsc{ii}) the output of
each merged \org-gadget is $0$. To show (\textsc{i}), we assume the opposite, i.e.\ there is a solution with $x_{k} \in (0;1)$ for
some $k$. This implies that $\sum_{i=0}^{n-1} \ReLU\bigl(0,\frac{1}{2} - x_i\bigr) + \ReLU\bigl(0 , x_i -\frac{1}{2}\bigr) < \frac{n}{2}$
as for all $x_i \in [0;1]$ we have $\ReLU\bigl(0,\frac{1}{2} - x_i\bigr) + \ReLU\bigl(0 , x_i -\frac{1}{2}\bigr) \leq \frac{1}{2}$, and
for $x_k$ we have $\ReLU\bigl(0,\frac{1}{2} - x_k\bigr) + \ReLU\bigl(0 , x_k -\frac{1}{2}\bigr) < \frac{1}{2}$. Furthermore,
we must have $-\sum_{i=0}^{m-1} \ReLU\bigl(0, 1 - \sum_{j=0}^2 f(x_{ij})\bigr) \leq 0$. Therefore, this cannot be a solution for
$(N', \varphi_{\text{in}}, \varphi_{\text{out}})$ as it does not satisfy $y = \frac{n}{2}$.

To show (\textsc{ii}), assume there is a solution such that one merged \org-gadget outputs a value different from $0$. Then,
$-\sum_{i=0}^{m-1} \ReLU\bigl(0, 1 - \sum_{j=0}^2 f(x_{ij})\bigr) < 0$ which in combination with (\textsc{i}) yields
$y < \frac{n}{2}$. Again, this is a contradiction.

Putting (\textsc{i}) and (\textsc{ii}) together, a solution for $(N', \varphi_{\text{in}}, \varphi_{\text{out}})$ implies the existence
of a model for $\psi$. For the opposite direction assume that $\psi$ has a model $I$. Then, a solution for
$(N', \varphi_{\text{in}}, \varphi_{\text{out}})$ is given by $x_i = 1$ if $I(X_i)$ is true and $x_i = 0$ otherwise.
\end{proof}
We can use Lemma~\ref{sec:np_compl;lem:id_replace} to argue that the same holds for pure ReLU-networks. The idea here is that
replacing identity- with ReLU nodes does increase the layer sizes of a NN but not the depth of it.

In the previous section, especially in the arguments of Corollary~\ref{sec:np_compl;cor:relu_nodes}, we pointed out that the occurrence of ReLU nodes is crucial for the NP-hardness of \reach{}. Thus, it is tempting to assume that any major restriction to these nodes leads to efficiently solvable classes.
\begin{theorem}
	\reach{} is NP-hard for NN where all ReLU nodes have at most one non-zero weighted input and simple specifications.
	\label{sec:low_compl_fragments;th:relu_in_one}
\end{theorem}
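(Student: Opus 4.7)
The plan is to adapt the construction of Theorem~\ref{sec:low_compl_fagments;th:one_layer_output_one} by inserting an extra hidden layer that moves all multi-input combining into identity nodes. The key observation is that the restriction applies only to ReLU nodes; identity nodes may still take arbitrary linear combinations of their inputs. Hence, instead of the merged clause-ReLU $\ReLU(1 - \sum_k f_j(x_{j_k}))$, which has three non-zero weighted inputs, we precompute its argument inside an identity node and then feed the result as the single non-zero input to a subsequent ReLU node.

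Concretely, given a \threesat{} formula $\psi$ with $n$ variables and $m$ clauses, I construct a network with two hidden layers. In the first hidden layer, for each variable $x_i$ I place two ReLU nodes computing $\ReLU\bigl(\tfrac{1}{2} - x_i\bigr)$ and $\ReLU\bigl(x_i - \tfrac{1}{2}\bigr)$ (each with the single non-zero input $x_i$), and for each clause $l_j$ I place an identity node $v_j = 1 - \sum_{k=0}^{2} f_j(x_{j_k})$, with $f_j$ defined as in Theorem~\ref{sec:low_compl_fagments;th:one_layer_output_one}. In the second hidden layer I place a ReLU node $w_j = \ReLU(v_j)$ for each clause (with the single non-zero input $v_j$), together with identity pass-through nodes forwarding the boolean-gadget outputs so that the layered structure is preserved. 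The output layer is a single identity node computing
\[
y \;=\; \sum_{i=0}^{n-1}\bigl(\ReLU(\tfrac{1}{2} - x_i) + \ReLU(x_i - \tfrac{1}{2})\bigr) \;-\; \sum_{j=0}^{m-1} w_j.
\]
By inspection, every ReLU node in this network has at most one non-zero weighted input.

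Reusing the simple specifications $\varphi_{\text{in}} = \bigwedge_{i=0}^{n-1}(x_i \geq 0 \land x_i \leq 1)$ and $\varphi_{\text{out}} = (y = \tfrac{n}{2})$, correctness is inherited verbatim from the proof of Theorem~\ref{sec:low_compl_fagments;th:one_layer_output_one}: the new network computes exactly the same function of $\boldsymbol{x}$ as the one there, so at any solution all $x_i \in \{0,1\}$ and each $w_j = 0$, which together imply that $\psi$ is satisfied; conversely, any satisfying assignment of $\psi$ yields an input realizing $y = \tfrac{n}{2}$.

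I expect no serious obstacle here; the content is the observation that identity activations freely admit the fan-in we need, plus the bookkeeping of pass-through nodes to maintain the layered structure. The only thing to double-check is that every ReLU node really has at most one non-zero weighted input, which is immediate from the construction: boolean-gadget ReLUs read only a single $x_i$, and clause ReLUs read only the corresponding $v_j$.
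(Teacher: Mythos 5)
Your proof is correct, but it takes a genuinely different route from the paper's. The paper eliminates the clause ReLUs altogether: it builds a network of output dimension $m+1$, with one identity output $y_j = \id\bigl(\sum_{k} f_j(x_{jk})\bigr)$ per clause and one output $z$ summing the \boolrepairedg-gadgets, and pushes the disjunction test into the (still simple) output specification $\bigwedge_{j=1}^m y_j \geq 1 \land z = 0$; the only ReLU nodes are then the single-input ones inside the bool gadgets, and correctness is inherited from the original \threesat{} reduction. You instead keep the single-output architecture of Theorem~\ref{sec:low_compl_fagments;th:one_layer_output_one} and interpose an identity node that precomputes each clause argument $v_j$, so that the clause ReLU $w_j = \ReLU(v_j)$ has fan-in one; since the computed function is unchanged, correctness is inherited verbatim from that theorem. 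Both arguments exploit the same loophole --- the fan-in restriction applies only to ReLU nodes, so identity nodes may combine arbitrarily many inputs --- but the trade-offs differ: your construction retains output dimension one at the cost of an extra hidden layer and of keeping $m$ clause ReLUs, while the paper's yields a network whose only ReLUs are the $2n$ single-input bool-gadget nodes at the cost of output dimension $m+1$. Note also that your identity nodes $v_j$ have fan-in three, so your network (like the paper's) cannot be converted to a pure-ReLU network via Lemma~\ref{sec:np_compl;lem:id_replace} without violating the fan-in bound, which is exactly the obstruction the paper points out before stating its weaker pure-ReLU variant.
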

\begin{proof}
	We prove NP-hardness via a reduction from \threesat. Let $\psi$ be a \threesat{} formula with $n$ propositional variables $X_i$ and $m$ clauses $l_j$.
	The reduction works in the same way as in the proof of Theorem~\ref{sec:np_compl;th:np_hardness}, but we use plain identity-nodes instead of \org-gadgets and
	we do not include an \andg-gadget. The resulting network $N_\psi: \Real^n \rightarrow \Real^{m+1}$ computes the function
	\begin{displaymath}
		\left( \id\left(\textstyle\sum_{i=1}^3 f_1(x_{1i})\right), \dotsc, \id\left(\textstyle\sum_{i=1}^3 f_m(x_{mi})\right), \id\left(\textstyle\sum_{i=1}^n \boolrepairedg(x_i) \right) \right),
	\end{displaymath}
	where $f_j(x)$ is either $\id(x)$ or $\notg(x)$, determined by the $j$-th clause of $\psi$. Furthermore, the input specification remains $\varphi_{\text{in}} = \top$ and we set the output specification to $\varphi_{\text{out}}= \bigwedge_{j=1}^m y_j \geq 1 \land z=0$,
	where $y_j$ is the output of the $j$-th identity-node corresponding to the $j$-th clause of $\psi$ and $z$ is the output corresponding to the sum of all \boolrepairedg-gadgets.
	First, note that the only ReLU nodes occurring in $N_\psi$ are inside the \boolrepairedg-gadgets, which according to Lemma~\ref{sec:np_compl;lem:bool_gadget_prop} have only one non-zero input. Second, note that the specifications $\varphi_{\text{in}}$ and $\varphi_{\text{out}}$ are simple.
	Now, if $z = 0$ then the value of an output $y_j$ is equivalent to the number of inputs equal to $1$. The remaining argument for the correctness of this reduction is the same as in the original NP-hardness proof.
\end{proof}

This result does not hold for the case of pure ReLU networks as the resulting networks include identity-nodes.
Additionally, we cannot make use of Lemma~\ref{sec:np_compl;lem:id_replace} here as it would result in NN including ReLU nodes with at least three inputs.
However, we can derive a slightly weaker result for the pure ReLU case by modifying the reduction again.
\begin{theorem}
	\reach{} is NP-hard for NN without identity-nodes and where all nodes have at most two non-zero weighted inputs and simple specifications.
\end{theorem}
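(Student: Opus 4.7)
The plan is to adapt the reduction from the proof of Theorem~\ref{sec:low_compl_fragments;th:relu_in_one}, removing all identity nodes while ensuring every ReLU node has at most two non-zero weighted inputs. The two obstacles I expect are (i) replacing the identity-summation inside \boolrepairedg, since naively wrapping it with a ReLU collapses the gadget to $0$ on $[0,1]$, and (ii) replacing the wide identity nodes that compute the three-literal clause sums $y_j$ and the $n$-term sum $z$ of the Boolean-test outputs.

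To overcome (i), I propose a ReLU-only Boolean-test gadget
\[
  b(x) \;:=\; \ReLU\bigl(-\ReLU(1/2-x)-\ReLU(x-1/2)+1/2\bigr).
\]
A short case analysis on $x \in [0,1]$ gives $b(x) = \min(x,1-x)$; in particular $b(x) \geq 0$ and $b(x) = 0$ iff $x \in \{0,1\}$. The outer ReLU has exactly two non-zero weighted inputs and bias $1/2$, and each inner ReLU has one non-zero input, so all three nodes respect the input-count restriction.

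For (ii) I exploit non-negativity: every intermediate signal in the planned construction lies in $[0,\infty)$, so $\ReLU(a+b) = a+b$ whenever $a,b \geq 0$, and hence a binary tree of two-input ReLU nodes computes any sum $a_1 + \dotsb + a_k$ exactly. Negated literals $\neg X_i$ are realized by single-input ReLU nodes computing $\ReLU(1-x_i) = 1-x_i$ under the assumption $x_i \in [0,1]$, which I enforce via the simple input specification $\varphi_{\text{in}} = \bigwedge_{i=1}^n (x_i \geq 0 \land x_i \leq 1)$. The network $N'_\psi$ then outputs $(y_1,\dotsc,y_m,z)$, where each $y_j$ is the binary-tree ReLU-sum of the three literal values of clause $l_j$ and $z$ is the binary-tree ReLU-sum of $b(x_1),\dotsc,b(x_n)$, and I take the simple output specification $\varphi_{\text{out}} = z = 0 \land \bigwedge_{j=1}^m y_j \geq 1$. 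Computation paths of differing depths are aligned into the layered form by padding with single-input ReLU nodes, which act as the identity on non-negative inputs.

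Correctness then follows routinely: $z = 0$ forces every $b(x_i) = 0$ and hence $x_i \in \{0,1\}$; under this Boolean reading $y_j \in \{0,1,2,3\}$ counts the literals of $l_j$ satisfied by the assignment, so $y_j \geq 1$ iff $l_j$ is satisfied. Conversely, any satisfying assignment of $\psi$ gives an input in $\{0,1\}^n$ meeting $\varphi_{\text{out}}$. Verifying that $N'_\psi$ uses only ReLU nodes, each with at most two non-zero weighted inputs, that both specifications are simple, and that the whole construction is polynomial in $|\psi|$ is then immediate from the description.
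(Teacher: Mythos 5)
Your proposal is correct, and it reaches the theorem by a route that differs from the paper's in its key gadgetry. The paper keeps the ``and of ors'' architecture of the original \threesat{} reduction and attacks the fan-in problem inside the clause gadget: it replaces \org{} by the nested binary form $\mathit{or}'(x_1,x_2,x_3)=\ReLU(1-\ReLU(\ReLU(1-x_1-x_2)-x_3))$, which still maps Boolean inputs to $\{0,1\}$ with every ReLU node having at most two non-zero inputs, and then disposes of the remaining identity nodes via Lemma~\ref{sec:np_compl;lem:id_replace}. You instead drop the \org{}- and \andg{}-gadgets altogether, expose one output $y_j\geq 1$ per clause (as in Theorem~\ref{sec:low_compl_fragments;th:relu_in_one}), realize every wide sum as a binary tree of two-input ReLU nodes using the invariant that all summands are non-negative, and replace \boolrepairedg{} by your own ReLU-only test $b(x)=\min(x,1-x)$, which is sound only on $[0,1]$ and therefore forces you to add the (still simple) input specification $x_i\geq 0\land x_i\leq 1$, whereas the paper keeps $\varphi_{\text{in}}=\top$. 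Your version buys something concrete: Lemma~\ref{sec:np_compl;lem:id_replace} preserves the number of non-zero inputs of the node it replaces, so it does not by itself tame the $m$-ary \andg{}-gadget or the $n$-ary sum of Boolean-test outputs; your binary trees address exactly this point, which the paper's one-line appeal to that lemma leaves implicit. The price is the extra input constraint and a logarithmic number of additional padding layers, neither of which affects the statement. One presentational nit: state explicitly that the correctness of the binary-tree sums ($\ReLU(a+b)=a+b$ for $a,b\geq 0$) holds for \emph{all} inputs satisfying $\varphi_{\text{in}}$, not only Boolean ones, since the implication ``$z=0$ forces every $b(x_i)=0$'' needs $z$ to equal $\sum_i b(x_i)$ exactly on the whole feasible input set.
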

\begin{proof}
    The proof works similar to the original NP-hardness argument but we use adjusted \org-gadgets.
	Such a modified gadget $\mathit{or}': \Real^3 \rightarrow \Real$ computes the function
	$\mathit{or}'(x_1,x_2,x_3) = \ReLU(1-\ReLU(\ReLU(1-x_1-x_2)-x_3))$.
	It is not hard to see that an $\mathit{or}'$-gadget fullfills the same property as an \org-gadget, namely that
	for $x_i \in \{0,1\}$ its output is $1$ if at least one $x_i$ is $1$ otherwise its $0$. Furthermore, we can see
	that this function can be represented by a gadget where all ReLU nodes have two non-zero inputs only.
	The remaining identity-nodes in other gadgets can safely be eliminated using Lemma~\ref{sec:np_compl;lem:id_replace}.
\end{proof}


\subsection{Neural Networks with simple parameters}
\label{sec:restricted_weights}

One could argue that the NP-hardness results of Sections~\ref{sec:np_compl} and \ref{sec:singlelayer} are only of limited relevance as the
constructed NN use very specific combinations of weights and biases, namely $-1, 0, \frac{1}{2}$ and $1$, which may be unlikely to
occur in this exact combination in real-world applications. To conquer such concerns, we show that \reach{} is already NP-hard in cases where only very weak
assumptions are made on the set of occurring weights and biases.

For $P \subseteq \Rat$ let $\mathop{\mathit{NN}}(P)$ be the class of NN which only use weights and biases from $P$.
Our main goal is to show that NP-hardness already occurs when $P$ contains three values: $0$, some positive and some negative value.
We make use of the same techniques as in Section~\ref{sec:np_compl} and start with defining further but similar gadgets.

\begin{definition}
	Let $c,d \in \Rat^{>0}$. We call NN, which compute one of the following functions, \emph{gadgets}:
	\begin{align*}
		\discreteg(x) &= \id\left(d-c\ReLU\left(-cx\right)-c\ReLU\left(dx\right)\right) \\
		\inverseeqg(x_1, x_2) &= \id(-cx_1 - cx_2) \\
		\normg(x) &= \id\left(-c\id\left(d-c\ReLU(-cx)\right)\right) \\
		\normnotg(x) &= \id\left(-c\ReLU\left(-c\id(-cx)\right)\right) \\
		\orleoneg(x_1,x_2,x_3) &= \id\bigl([d\cdot c^4]-c\ReLU\bigl([d\cdot c^4]+\textstyle\sum_{i=1}^3-cx_i\bigr)\bigr) \\
		\orgeqoneg(x_1,x_2,x_3) &= \id\bigl([d\cdot c^4]-c\ReLU\bigl([d\cdot c^2]+\textstyle\sum_{i=1}^3-cx_i\bigr)\bigr) \\
		\andg(x_1, \dotsc, x_n) &= \id(dx_1 + \dotsb + dx_n)
	\end{align*}
	where $[d \cdot c^n]$ with even $n$ is an abbreviation for $\underbrace{-c\cdot\id(-c\cdot\id(\dotsb(-c\cdot\id}_n(d))\dotsb))$.
	\label{sec:restricted_weights;def:gadgets}
\end{definition}
The natural candidates for these gadgets are visualized in Figure~\ref{sec:restricted_weights;fig:gadgets}.
Similar to the gadgets defined in Section~\ref{sec:np_compl} each of the above ones fullfills a specific purpose.

\begin{figure}[!h]
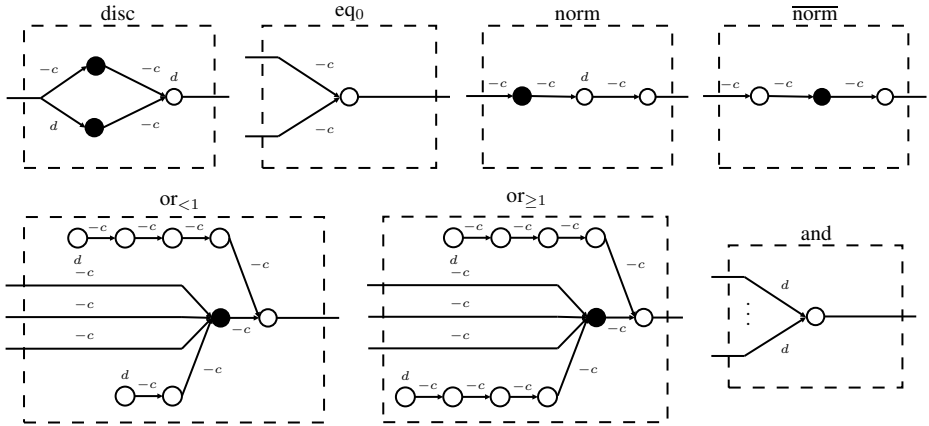

\vspace*{2mm}
	\centering
	\include{graphics/gadgets_restricted_weights}\vspace*{-4mm}
	\caption{Gadgets used to show that \reach{} is NP-hard if restricted to NN from $\mathop{\mathit{NN}}\left(\{-c,0,d\}\right)$. A non-weighted outgoing edge of a gadget is connected to a weighted incoming one of another gadget in the actual construction or are considered as outputs of the overall neural networks.}	\label{sec:restricted_weights;fig:gadgets}\vspace*{-3mm}
\end{figure}

\begin{lemma}
	Let $c,d \in \Rat^{>0}$. The following statements hold:
	\begin{enumerate}
\itemsep=0.9pt
		\item \label{prop:discreteg} $\discreteg(r) = 0$ if and only if $r = -\frac{d}{c^2}$ or $r = \frac{1}{c}.$
		\item \label{prop:eqg} $\inverseeqg(r_1,r_2) = 0$ if and only if $r_1 = -r_2$.
		\item \label{prop:normg} $\normg(-\frac{d}{c^2})=0$ and $\normg(\frac{1}{c})=-dc$.
		\item \label{prop:normnotg} $\normnotg(\frac{d}{c^2})=-dc$ and $\normnotg(-\frac{1}{c})=0$.
		\item \label{prop:orzero}$\orleoneg(0,0,0)=dc^4 - dc^5$ and $\orgeqoneg(0,0,0)=dc^4 - dc^3$
		\item \label{prop:orgeqoneg} If all $r_i \in \{-dc,0\}$ and $c<1$ then $\orleoneg(r_1,r_2,r_3)=dc^4$ if at least one $r_i=-dc$.
		\item \label{prop:orleoneg}  If all $r_i \in \{-dc,0\}$ and $c\geq 1$ then $\orgeqoneg(r_1,r_2,r_3)=dc^4$
                 if at least one $r_i=-dc$.\vspace*{-1mm}
	\end{enumerate}
	\label{sec:restricted_weights;lem:gadgets}
\end{lemma}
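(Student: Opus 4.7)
The plan is to verify each item by direct computation from the gadget definitions in Definition~\ref{sec:restricted_weights;def:gadgets}, splitting naturally into three groups.

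Items (1)--(4) concern the small gadgets $\discreteg$, $\inverseeqg$, $\normg$, $\normnotg$ and reduce to case analysis on which linear piece of each internal $\ReLU$ is active. For (1), I would split on the sign of $r$: if $r\ge 0$, only $\ReLU(dr)$ fires and $\discreteg(r)=d-cdr$, with unique zero $r=1/c$; if $r<0$, only $\ReLU(-cr)$ fires and $\discreteg(r)=d+c^2r$, with unique zero $r=-d/c^2$. Item (2) is immediate from $\inverseeqg(r_1,r_2)=-c(r_1+r_2)$ and $c>0$. Items (3) and (4) amount to substituting each of the four designated inputs into the nested formulas, deciding which branch of the inner $\ReLU$ is taken, and finishing by arithmetic; in each case exactly one $\ReLU$ branch is active so the computation collapses to a single identity to check.

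Item (5) is pure substitution: with $x_1=x_2=x_3=0$ the sum $\sum_{i=1}^3 -cx_i$ vanishes, so the $\ReLU$ in $\orleoneg$ receives $[d\cdot c^4]=dc^4$ and the one in $\orgeqoneg$ receives $[d\cdot c^2]=dc^2$; both are positive, so the $\ReLU$ acts as the identity and the outer expression evaluates to $dc^4-c\cdot dc^4$ and $dc^4-c\cdot dc^2$, as claimed.

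The main content lies in items (6) and (7). The strategy in each case is the same: show that under the stated hypothesis on the $r_i$ the argument of the outer $\ReLU$ is non-positive, so the $\ReLU$ outputs $0$ and the whole gadget collapses to the outer bias $[d\cdot c^4]=dc^4$. The key obstacle is calibrating the inner bias so that it is beaten by the contribution of a single active input in the relevant regime: for $c<1$ the inner bias $dc^4$ in $\orleoneg$ is small enough to be dominated by the contribution coming from a single $r_i=-dc$, whereas for $c\ge 1$ this is no longer true, which is precisely why $\orgeqoneg$ uses the larger bias $dc^2$, again dominated by a single active input in that regime. Once the single-active-input case is dispatched, adding further active inputs only pushes the $\ReLU$ argument in the same direction, so the ``at least one'' formulation follows at once and completes the lemma.
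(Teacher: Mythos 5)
Your treatment of items (1)--(5) is correct and, for item (1), essentially identical to the paper's own argument (a case split on the sign of $r$ reducing each branch to a linear equation). In fact the paper only works out item (1) and dismisses (2)--(7) as ``straightforward implications of the function computed by the respective gadgets'', so your explicit computations for (2)--(5) go beyond what the paper records, and they all check out against Definition~\ref{sec:restricted_weights;def:gadgets}.

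The argument you sketch for items (6) and (7), however, does not survive the computation it defers. You want the argument of the inner $\ReLU$ to become non-positive once some $r_i=-dc$, so that the $\ReLU$ outputs $0$ and the gadget collapses to the outer bias $dc^4$. But with the definitions exactly as given, an active input contributes $-c\cdot(-dc)=+dc^2>0$ to that argument: if $k\geq 1$ of the $r_i$ equal $-dc$, the $\ReLU$ in $\orleoneg$ receives $dc^4+k\,dc^2>0$, stays active, and the gadget outputs $dc^4-dc^5-k\,dc^3\neq dc^4$ (for instance $c=\frac{1}{2}$, $d=1$, $r_1=-\frac{1}{2}$, $r_2=r_3=0$ yields $-\frac{3}{32}$ rather than the claimed $\frac{1}{16}$). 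Your ``calibration'' discussion correctly identifies the magnitudes at play --- $dc^4\leq dc^2$ iff $c\leq 1$, which is exactly why the two variants of the gadget exist --- but it silently assumes that a single active input pushes the $\ReLU$ argument \emph{down} by $dc^2$, whereas under the stated weighting it pushes it \emph{up}. So either a sign must be repaired in the gadget definition or in the stated input values (the intended contribution of a satisfied literal is $-dc^2$) before your step goes through, or the step as described is simply false; in either case ``dominated by the contribution of a single active input'' is not a proof of non-positivity and must be replaced by the explicit inequalities $dc^4-dc^2\leq 0$ for $c<1$, resp.\ $dc^2-dc^2\leq 0$ for $c\geq 1$, together with a verification of the sign of that contribution. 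The paper's proof does not expose this because it never carries out the computation for items (5)--(7).
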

\begin{proof}
	We start with 
	(\ref{prop:discreteg}) and make a case distinction. If $r \leq 0$ then $\discreteg(r)=d + c^2r$.
	It is easy to see that this becomes zero if and only if $r = -\frac{d}{c^2}$.
	The next case is $r=0$ which leads to $\discreteg(0) = d$. And for the last case $r > 0 $ we get that $\discreteg(r)=d - cdr$.
	Again, it is easy to see that this becomes zero if and only if $r = \frac{1}{c}$.
	
	The remaining properties are straightforward implications of the function computed by the respective gadgets.
\end{proof}
From cases (\ref{prop:orgeqoneg}) and (\ref{prop:orleoneg}) of Lemma~\ref{sec:restricted_weights;lem:gadgets} we can see that the properties of
$\orgeqoneg$ and $\orleoneg$ are the same,
the only difference is that $\orgeqoneg$ fullfills it for the case that $c\geq 1$ and $\orleoneg$ fullfills it in the $c < 1$ case.

Let $\psi$ be some \threesat-formula. We construct a network $N_{\psi,c,d}$, using the gadgets above, which is later used in a reduction from \threesat
to \reach restricted to NN using only $-c,0$ and $d$ as weights or bias.
\begin{definition}
Let $\psi$ be a \threesat-formula with $n$ propositional variables $X_i$ and $m$ clauses, $c,d \in \Rat^{>0}$ and $i \in \{1,\dotsc,n\}$. The network
$N_{\psi,c,d}$ computes a function $\Real^{2n} \rightarrow \Real^{2n+1}$ with two input dimensions $x_i$ and $\overline{x_i}$ per propositional variable $X_i$.
We describe the structure of $N_{\psi,c,d}$ by defining the function computed for each output dimension:
\begin{align*}
	z_i = &\id(-c\id(-c\id(-c\id(-c\id(-c\discreteg(x_i)))))) \\
	e_i = &\id(-c\id(-c\id(-c\id(-c\id(-c\id(-c\inverseeqg(x_i,\overline{x_i}))))))) \\
	y = &\begin{cases}
	\mathop{\mathit{and}_m}\bigl(\orgeqoneg(\id(f_{i^1_1}),\id(f_{i^1_2}),\id(f_{i^1_3}), \dotsc, \orgeqoneg(\id(f_{i^m_1}),\id(f_{i^m_2}),\id(f_{i^m_3}))\bigr)
	&, \text{ if } c \geq 1 \\
	\mathop{\mathit{and}_m}\bigl(\orleoneg(\id(f_{i^1_1}),\id(f_{i^1_2}),\id(f_{i^1_3}), \dotsc, \orleoneg(\id(f_{i^m_1}),\id(f_{i^m_2}),\id(f_{i^m_3}))\bigr)
	&, \text{ otherwise,}
	\end{cases}
\end{align*}
where $f_{i^j_k} = \normg(x_i)$ if the variable $X_i$ is at the $k$-th position in the $j$-th clause and occurs positively and
$f_{i^j_k} = \normnotg(\overline{x_i})$ if it occurs negatively.	
\end{definition}
Note that the redundant identity nodes ensure a layerwise structure and that each such network $N_{\psi,c,d}$ has eight layers.
Moreover, from the definition of the gadgets and the construction above follows that $N_{\psi,c,d} \in \mathit{NN}(\{-c,0,d\})$.

\medskip
With this definition at hand, we are suited to prove our main statement of this section. To keep the proof clear and w.l.o.g., we do not distinguish
$\orgeqoneg$- and $\orleoneg$-gadgets and simply call them $\org$-gadgets.
\begin{theorem}
	Let $c,d \in \Rat^{>0}$. \reach{} restricted to neural networks from $\mathit{NN}(\{-c,0,d\})$ and simple specifications is NP-hard.
	\label{sec:restricted_weights;th:np_compl}
\end{theorem}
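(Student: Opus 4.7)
The plan is to give a polynomial-time reduction from \threesat{} using the network $N_{\psi,c,d}$ already prepared in Definition~\ref{sec:restricted_weights;def:gadgets}, together with a simple output specification that collapses the relevant degrees of freedom. Given a \threesat{} formula $\psi$ with $n$ variables and $m$ clauses, I would take $\varphi_{\text{in}} = \top$ and
\[
\varphi_{\text{out}} \;=\; \bigwedge_{i=1}^n (z_i = 0) \;\land\; \bigwedge_{i=1}^n (e_i = 0) \;\land\; (y = md^2c^4),
\]
a conjunction of single-variable constraints and hence simple. That $N_{\psi,c,d} \in \mathop{\mathit{NN}}(\{-c,0,d\})$ holds by construction, since every gadget of Definition~\ref{sec:restricted_weights;def:gadgets}, including every instance of the abbreviation $[d\cdot c^n]$, uses only $-c$, $0$, and $d$ as weights or biases. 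The whole construction is clearly polynomial in $|\psi|$, so only the correctness of the reduction needs attention.

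For correctness, my plan is to chain the seven cases of Lemma~\ref{sec:restricted_weights;lem:gadgets}. First, property~(1) combined with $z_i = 0$ forces $x_i \in \{-d/c^2, 1/c\}$; property~(2) combined with $e_i = 0$ then forces $\overline{x_i} = -x_i$, so the inputs collectively encode a Boolean assignment (reading $x_i = 1/c$ as ``$X_i$ true''). Properties~(3) and~(4) show that $\normg(x_i)$, respectively $\normnotg(\overline{x_i})$, takes the value $-dc$ exactly when the corresponding literal is true and $0$ otherwise, so the inputs to every or-gadget inside $N_{\psi,c,d}$ sit in $\{0, -dc\}$. Properties~(5)--(7) give the clause-level correspondence: each or-gadget outputs $dc^4$ iff at least one of its inputs is $-dc$ (i.e.\ iff the associated clause is satisfied) and a strictly smaller constant otherwise. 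Finally the $\mathop{\mathit{and}_m}$-gadget, which multiplies the sum of its inputs by $d$, attains the target $md^2c^4$ iff every clause is satisfied. The backward direction is then immediate: a model of $\psi$ yields an input $(x_i,\overline{x_i})$ with $x_i = 1/c$ for true variables, $x_i = -d/c^2$ for false ones and $\overline{x_i} = -x_i$, whose forward pass meets $\varphi_{\text{out}}$.

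The main obstacle I expect is verifying cleanly that the single constraint $y = md^2 c^4$ really pins down ``all clauses satisfied'' with no spurious witness where one or-gadget outputting slightly more than $dc^4$ compensates for another outputting less. This is ruled out precisely by the two-valued behaviour of the or-gadgets on inputs from $\{0,-dc\}$, and it is here that the case split $c < 1$ versus $c \geq 1$ (handled by $\orleoneg$ respectively $\orgeqoneg$) has to be traced through carefully, since the ``all inputs $0$'' value differs between the two regimes but must, in both, lie strictly below $dc^4$ so that the sum inside the $\mathop{\mathit{and}_m}$-gadget can only reach $m\cdot dc^4$ when every summand attains its maximum. Once that monotonic gap is confirmed in each regime, the rest of the argument is a routine composition of the gadget-level facts already established in Lemma~\ref{sec:restricted_weights;lem:gadgets}.
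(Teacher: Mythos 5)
Your proposal is correct and follows essentially the same route as the paper: the same instance $(N_{\psi,c,d},\top,\bigwedge_i z_i=0\land e_i=0\land y=md^2c^4)$ and the same chaining of cases (1)--(7) of Lemma~\ref{sec:restricted_weights;lem:gadgets} in both directions. Your explicit check that the or-gadgets' ``all inputs $0$'' value lies strictly below $dc^4$ in both the $c<1$ and $c\geq 1$ regimes (so that no summand can compensate for another inside the $\mathit{and}_m$-gadget) makes precise a step the paper's proof states only implicitly.
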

\begin{proof}
	We prove the result above via reduction from \threesat.
	Take a \threesat-formula $\psi$ with $n$ propositional variables $X_i$ and $m$ clauses $l_j$ and consider an \reach-instance
	$(N_{\psi,c,d}, \varphi_{\text{in}}, \varphi_{\text{out}})$ with $N_{\psi,c,d}$ defined above, $\varphi_{\text{in}}=\top$ and
	$\varphi_{\text{out}}=\bigwedge_{i=1}^{n} z_i = 0 \land e_i = 0 \land y = m\cdot d^2c^4$. Obviously, these specifications are simple and
	clearly $(N_{\psi,c,d}, \varphi_{\text{in}}, \varphi_{\text{out}})$ can be constructed in polynomial time in the size of $\psi$.
	
\medskip
	For the correctness of the construction we start with assuming that $\psi$ has a model $I$. We claim that a solution for
	$(N_{\psi,c,d}, \varphi_{\text{in}}, \varphi_{\text{out}})$ is given by $x_i = \frac{1}{c}$ if $I(X_i)$ is true,
	$x_i = -\frac{d}{c^2}$ otherwise, and $\overline{x_i}=-x_i$. Note that $\varphi_{\text{in}}$ is trivially satisfied by this assignment.
	So apply these inputs to $N_{\psi,c,d}$. According to cases (\ref{prop:discreteg}) and (\ref{prop:eqg}) of
	Lemma~\ref{sec:restricted_weights;lem:gadgets}, all outputs $z_i$ and $e_i$ are $0$.
	Thus, it is left to argue that $y = m\cdot d^2c^4$. Consider one of the $\org$-gadgets occurring in $N_{\psi,c,d}$, corresponding to a clause $l_j$.
	Its inputs are given by the \normg- and \normnotg-gadgets connected to the inputs $x_i$, resp.\ $\overline{x_i}$ corresponding to the $X_i$
	occurring in $l_j$. According to cases (\ref{prop:normg}) and (\ref{prop:normnotg}) of Lemma~\ref{sec:restricted_weights;lem:gadgets},
	these inputs are either $0$ or $-dc$.
	If $l_j$ is satisfied by $I$ then there is at least one input to the $\org$-gadget that is equal to $-dc$.
	From the fact that $\psi$ is satisfied by $I$ and the cases (\ref{prop:orgeqoneg}) and (\ref{prop:orleoneg}) of
	Lemma~\ref{sec:restricted_weights;lem:gadgets}
	we get that each $\org$-gadget outputs $dc^4$.
	Therefore, the output $y$ of $N_{\psi,c,d}$ is $m \cdot d^2c^4$. This means that $\varphi_{\text{out}}$ is valid as well.
	
	Consider now the converse direction. A solution for $(N_{\psi,c,d}, \varphi_{\text{in}}, \varphi_{\text{out}})$ must yield that all $x_i$ are $\frac{1}{c}$ or $-\frac{d}{c^2}$ and $x_i = \overline{x_i}$ as all $z_i$ and $e_i$ have to equal $0$.
	Therefore the output of each $\normg$- and $\normnotg$ in $N_{\psi,c,d}$ is either $-dc$ or $0$.
    This implies that all $m$ $\org$-gadgets have to output $dc^4$ as $y$ must equal $m\cdot d^2c^4$.
    According to cases (\ref{prop:orzero}), (\ref{prop:orgeqoneg}) and (\ref{prop:orleoneg}) of Lemma~\ref{sec:restricted_weights;lem:gadgets},
    each $\org$-gadget has at least one input that is $-dc$ which in turn means that there is at least one indirectly connected $x_i$ or $\overline{x_i}$ that is $\frac{1}{c}$ resp. $\frac{d}{c^2}$.
    Thus, $\psi$ is satisfied by setting $X_i$ true if $x_i = \frac{1}{c}$ and false if $x_i = -\frac{d}{c^2}$.
\end{proof}

If $d=c$ and we allow for arbitrary specifications, we can show that $0$ as a value for weights or biases is not required to keep the lower bound of
\reach{}.
\begin{theorem}
	Let $c \in \Rat^{>0}$. \reach is NP-hard restricted to neural networks from $\mathit{NN}(\{-c,c\})$ and arbitrary specifications.
	\label{sec:restricted_weights;th:no_zero}
\end{theorem}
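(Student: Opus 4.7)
The plan is to reduce from \threesat by adapting the construction $N_{\psi, c, c}$ of Theorem~\ref{sec:restricted_weights;th:np_compl} (obtained by instantiating that result with $d = c$), systematically eliminating every weight and bias of value $0$. The newly available power of arbitrary (non-simple) specifications will be exploited to compensate for the loss of $0$ as a wiring option.

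First, I would augment the input layer with two auxiliary dimensions $z_0, z_1$ and force them via $\varphi_{\text{in}}$ to satisfy $z_0 = 0$ and $c \cdot z_1 = 1$, both expressible as linear equalities. A $+c$-weighted edge from $z_0$ then contributes $0$, serving as a drop-in replacement for the missing weight-$0$ edges of the original construction, while a $-c$-weighted edge from $z_1$ contributes $-1$, so that a bias of $+c$ accompanied by such an edge has effective value $0$. To use these references at every depth, I would install in each hidden layer $\ell$ ``propagator'' nodes $O_\ell$ and $U_\ell$ whose outputs are (respectively) the constants $0$ and $1$; with such propagators available, every weight-$0$ edge of $N_{\psi, c, c}$ can be rerouted through the appropriate $O$-propagator, and every bias-$0$ at a layer-$\ell$ node is replaced by bias $+c$ together with an extra $-c$-weighted edge from $U_{\ell - 1}$. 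The resulting network $N'$ lies in $\mathit{NN}(\{-c,c\})$ and, on the distinguished outputs inherited from $N_{\psi, c, c}$, computes exactly the same values as before, so the correctness argument transfers directly.

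The main obstacle is building the propagators, since every node in a $\mathit{NN}(\{-c,c\})$-layer is fully connected to its predecessors via $\pm c$-weighted edges and cannot simply ignore the other signals. The key device is a \emph{twinning invariant}: for every hidden node in every layer there is a distinct twin in the same layer whose output is guaranteed to be identical, so that a pair of $+c$/$-c$-weighted edges from a twin pair always contributes $0$ to any successor's pre-activation. Twinning is installed at the input layer by adding a mirror $\tilde{x}_i$ for each original input $x_i$ and each complementary $\overline{x_i}$ of $N_{\psi, c, c}$, with $\varphi_{\text{in}}$-constraints $x_i - \tilde x_i = 0$ and $\overline{x_i} - \tilde{\overline{x_i}} = 0$; these two-variable equalities are precisely where arbitrary specifications become essential. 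The invariant is preserved inductively by duplicating every constructed hidden node and every edge of $N'$. Under the twinning invariant, the pre-activation of a propagator node collapses to a small number of non-cancelling contributions plus its $\pm c$-bias, and a concrete choice of signs yields the intended constant output layer by layer.

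Finally, the output specification is taken to be the one of Theorem~\ref{sec:restricted_weights;th:np_compl}'s reduction (which pins $z_i = e_i = 0$ and $y = m\cdot d^2 c^4$), augmented with the two simple equalities $O_{L-1} = 0$ and $U_{L-1} = 1$ on the propagator outputs to certify that the twinning has indeed been respected. A solution of the resulting \reach-instance yields, by projection to the original input dimensions of $N_{\psi, c, c}$, a solution of the original instance and hence a model of $\psi$; conversely, any model of $\psi$ extended by the prescribed values for $z_0, z_1$, the mirrors $\tilde x_i, \tilde{\overline{x_i}}$, and the propagator chains is a solution of the new instance. Since the blow-up in the number of nodes, edges, and specification clauses is polynomial, the reduction is polynomial and the result follows.
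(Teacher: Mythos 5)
Your overall strategy is the same as the paper's: use the now-permitted multi-variable input constraints to pair up inputs and to duplicate ("twin") hidden nodes so that a $+c/-c$ edge pair from a twin pair simulates a zero weight, and deliver bias-cancelling constants from dedicated inputs whose values are pinned by $\varphi_{\text{in}}$. However, there is a genuine gap in the propagator construction, which is exactly the delicate part. A node of a network in $\mathit{NN}(\{-c,c\})$ whose non-propagator inputs cancel by twinning has pre-activation of the form $(\alpha+\beta)\,c\,u_{\ell-1} + \gamma c$ with $\alpha,\beta,\gamma\in\{-1,+1\}$, where $u_{\ell-1}$ is the value of the previous layer's propagator (and its twin). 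This set contains $1$ only for special values of $c$ (e.g.\ $c=1$ or $c=\tfrac13$), and it contains $0$ only when $u_{\ell-1}=\pm\tfrac12$; so a single chain of nodes $U_\ell$ all outputting the constant $1$, and nodes $O_\ell$ all outputting $0$, simply cannot be built for arbitrary $c\in\Rat^{>0}$. The sentence ``a concrete choice of signs yields the intended constant output layer by layer'' hides precisely this failure. The paper's proof avoids it by \emph{not} propagating a fixed constant: it introduces one bias-input pair $x_{\text{bias},i},\overline{x_{\text{bias},i}}$ \emph{per layer}, each with its own starting value $-\sum_{j=0}^{i-1}\frac{1}{2^{j+1}c^j}$ chosen so that after the $(i-1)$-step affine distortion of its delivery chain (each step multiplies by $2c$ and adds $c$) the arriving value is exactly what cancels an artificial bias of $+c$ at layer $i$.

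Two further, more mechanical errors: first, with $c\cdot z_1=1$ you have $z_1=\tfrac1c$, so a $-c$-weighted edge from $z_1$ contributes $-1$, which cancels a bias of $+c$ only if $c=1$; you would need $z_1=1$ (and, once twinning forces a second edge from $z_1$'s mirror, in fact $z_1=\pm\tfrac12$). Second, the claim that $N'$ ``computes exactly the same values as before'' on the distinguished outputs is false: full connectivity forces every formerly nonzero edge to be duplicated onto the twin with the \emph{same} sign (opposite signs would annihilate the signal), so every layer doubles the carried values. The output constraint must be rescaled accordingly — the paper replaces $y=m\cdot c^6$ by $y=2^7(m\cdot c^6)$ — and the bias-cancellation constants must also account for this doubling, which is where the factors $\tfrac{1}{2^{j+1}}$ in the paper's $x_{\text{bias},i}$ come from. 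As written, your output specification would make the reduction incorrect even if the propagators existed.
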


\begin{proof}
	This is done in the same way as the proof of Theorem~\ref{sec:restricted_weights;th:np_compl} with some slight modifications to the resulting NN and specifications.
	We only sketch this reduction by describing the differences compared to the instances $(N_{\psi,c,c}, \varphi_{\text{in}}, \varphi_{\text{out}})$ resulting from the reduction used in Theorem~\ref{sec:restricted_weights;th:np_compl}.
	Our goal is to eliminate all zero weights and bias in $N_{\psi,c,c}$ while ensuring that the modified network $N_{\psi,c,c}' \in \mathop{\mathit{NN}}(\{-c,c\})$ has the same functional properties.
	Therefore, we split the argument into two parts: first, we argue how to eliminate zero weights, and second,
	we argue how to eliminate zero bias.

\medskip	
	To get rid of zero weights, we add for each $x_i$ the conjunct $x_i = -\overline{x_i}$ to the input specification.
	As a side effect, this makes the usage of \inverseeqg-gadgets obsolete, though we do not include them and the following chain of $\id$-nodes
	and we do not include $\bigwedge_{i=0}^{n-1} e_i = 0$ in the output specification as well.
	Now, consider the weights between the input and the first hidden layer of $N_{\psi,c,c}'$.
	If the inputs $x_i$ and $\overline{x_i}$ were originally weighted with zero in some node in the first hidden layer of $N_{\psi,c,c}$,
	we set the weights corresponding to $x_i$ and $\overline{x_i}$ to be $c$ in $N_{\psi,c,c}'$.
	In combination with the input constraint $x_i = -\overline{x_i}$ this is equal to weighting $x_i$ and $\overline{x_i}$ with zero.
	If $x_i$ ($\overline{x_i}$) was weighted with $c$ in $N_{\psi,c,c}$, we set the weight of $\overline{x_i}$ ($x_i$) to be $-c$ and
	if it was weighted with $-c$, we set the weight of its counterpart to be $c$ in $N_{\psi,c,c}'$.
	This leads to the case that all non-zero inputs of a node in the first hidden layer are doubled compared to the same inputs in $N_{\psi,c,c}$.
	Consider now the weights between two layers $l$ and $l+1$ with $l > 0$ of $N_{\psi,c,c}$.
	For each node in $l$ we add an additional node in the layer $l$ of $N_{\psi,c,c}'$ with the same input weights.
	If the output of a node in layer $l$ was originally weighted with zero in $N_{\psi,c,c}$
	then we weight it with $c$ and the corresponding output of its copy with $-c$ in $N_{\psi,c,c}'$.
	If the output was originally weighted with weight $c$ ($-c$) in $N_{\psi,c,c}$
	then we weight the output of the copy node with $c$ ($-c$) $N_{\psi,c,c}'$, too.
	As before, this doubles the input values at the nodes in layer $l+1$ of $N_{\psi,c,c}'$ compared to $N_{\psi,c,c}$,
	which means that in comparison to $N_{\psi,c,c}$ the output value of $N_{\psi,c,c}'$ is multiplied by $2^7$.
	Thus, we have to change the output constraint of $y$ to be $y = 2^7(m \cdot c^6)$.
	
	To get rid of zero bias, we add the inputs $x_{\text{bias},1}, \overline{x_{\text{bias},1}}, \dotsc , x_{\text{bias},7},\overline{x_{\text{bias},7}}$ to $N_{\psi,c,c}'$ and the constraints $x_{\text{bias},i} = -\sum_{j=0}^{i-1}\frac{1}{2^{j+1}c^j}$ and $x_{\text{bias},i} = - \overline{x_{\text{bias},i}}$ to the input specification.
	Next, we set the bias of all nodes which originally had a zero bias to be $c$.
	For $x_{\text{bias},i}$ with $i > 1$ we add a chain of $i-1$ identity nodes each with bias $c$ and interconnected with weight $c$
	and connect this chain with weight $c$ to $x_{\text{bias},i}$ and $-c$ to $\overline{x_{\text{bias},i}}$.
	All other incoming weights of this chain are assumed to be zero which is realized using the same techniques as described in the previous paragraph.
	Now, if a node in the first hidden layer of $N_{\psi,c,c}$ has a zero bias in $N_{\psi,c,c}$,
	we give weight $c$ to the input $x_{\text{bias},1}$ and $-c$ to $\overline{x_{\text{bias},1}}$ in $N_{\psi,c,c}'$.
	If the input specification holds then the bias of the node plus these inputs sum up to zero.
	If a node in some layer $l \in\{2, \dotsc, 7\}$ has a zero bias in $N_{\psi,c,c}$,
	we give weight $c$ to the output of the last node of the chain corresponding to $x_{\text{bias},l}$ and its copy, resulting from the zero weight
	elimination of the previous paragraph.
	Again, if the input specification holds, the bias value of this node is nullified.

\medskip	
	In summary, we get the specifications $\varphi_{\text{in}}'= \bigwedge_{i=1}^n x_i = \overline{x_i} \wedge \bigwedge_{i=1}^7 x_{\text{bias},i} = -\sum_{j=0}^{i-1}\frac{1}{2^{j+1}c^j} \wedge x_{\text{bias},i} = - \overline{x_{\text{bias},i}}$
	and $\varphi_{\text{out}}' = \bigwedge_{i=1}^n z_i = 0 \wedge y=2^7(m \cdot c^6)$. Note that $\varphi_{\text{in}}'$ is no longer simple and that
	$N_{\psi,c,c}' \in \mathop{\mathit{NN}}(\{-c,c\})$.
\end{proof}


\section{Conclusion and outlook}
\label{sec:conclusion}
We investigated the computational complexity of the reachability problem for NN with PWL activations, including the common activation function ReLU, and input/output specifications
given by conjunctions of linear inequalities over the respective input/output dimensions.
We revised the original proof of its NP-completeness, fixing flaws in both the upper and lower bound, and showed that the parameter driving NP-hardness is
the number of PWL nodes. Furthermore, we showed that \reach is difficult for very restricted classes of small neural networks already, respectively two parameters
of different signum and $0$ occurring as weights and biases suffice for NP-hardness. This indicates that finding non-trivial classes of NN and specifications with
practical relevance and polynomial \reach{} is unlikely.

Additionally, modern uses of deep learning techniques limit the practical applicability of findings about \reach even more.
There are two reasons for this. On the one hand, the deep learning framework circumvents far more techniques and models than our considered NN model.
On the other hand, usual deep learning applications are concerned with data of very high complexity and dimensionality so that simple linear specifications
as they are considered here are hardly expressive enough for formalising meaningful properties of the input data and possibly also the output values.

The former concern is tackled by extending the analysis to similar reachability problems of different models, like Convolutional Neural Networks (CNN) \cite{KhanSZQ20} or Graph Neural Networks (GNN) \cite{WuPCLZY21}.
The former incorporates different kinds of layers like convolutional layers or pooling layers. However, in the classical sense, these are just layers consisting of nodes with PWL activations
like ReLU, which makes it straightforward to argue that reachability for CNN is NP-complete, too.
The case of GNN is more challenging as these are neural network based models which compute functions over graphs. First results on this matter are provided in \cite{SL23},
showing that there are computational limits for the possibility of formal verifying reachability properties of common GNN models.

The latter concern is best understood by example. Consider the problem \reach{} in the context of an image classification task like classifying whether an
x-ray picture shows a broken or healthy bone. It immediately becomes clear that our current definition of specifications is not suited to
express properties like ``the picture shows a bone with a hairline fracture''. But expressing such properties is obviously desired for tasks like this,
in order to rule out misclassification of severely damaged bones.
Thus, we need specifications of higher expressibility which directly leads to the question whether the upper complexity bound, namely the membership to NP,
holds for the reachability problem with specifications extended in a particular way.
Starting from our current definition of specifications, a natural extension is allowing for full boolean logic with linear inequalities as propositions,
which means adding disjunction and negation operators to our current definition. It is not hard to see that adding disjunction does not change the upper bound.
The idea is that we guess for each occurring disjunction which disjunct is satisfied and include this guess in the witness for the NP membership.
Obviously, this part of the witness is polynomially bounded by the number of disjunctions in the specifications and therefore by the size of the specifications.
As soon as we allow for disjunctions, adding the negation operator does not harm the upper bound either.
We can bring each such specification into a normal form  (for example conjunctive normal form) where all negations are directly in front of propositions like $\neg (x \leq 2)$ which we interpret as $x > 2$.
We provided an argument how to handle strict inequalities in the proof of Lemma~\ref{thm:relulinprognp}.
In summary, full boolean expessibility in the specifications does not change the upper bound for the reachability problem.
The obvious next step is adding existential and universal quantifiers to the specifications.
Allowing for pure existential quantification in the specifications just leads to an equivalent reformulation of the reachability problem itself. Currently,
\reach{} is formulated such that it existentially quantifies over the free variables in the input and output specifications. Using existential quantifiers,
a possible reformulation leads to a normal form of specifications like $\varphi = \exists x_1. \exists x_2. x_1 - x_2 \geq 3 \land N(x_1,x_2) \geq 0$ where $N$ is the
considered NN and the question is whether $\varphi$ is satisfiable. It is not hard to see that such a reformulation does not harm the membership in NP. Such specifications
with pure universal quantification are not more difficult then, either, because of $\forall x. \psi \equiv \neg \exists x. \neg\psi$. Hence, we simply get that
reachability with purely universal specifications is Co-NP-complete. It remains to be seen whether specifications of certain quantifier alternation depth will yield
NN-reachability problems that are complete for the corresponding classes in the polynomial hierarchy \cite{Stockmeyer:1976:PH}.

\bibliographystyle{fundam}
\bibliography{ref}

\end{document}